\newcommand{\newreptheorem}[2]{%
\newenvironment{rep#1}[1]{%
 \def\rep@title{#2 \ref{##1}}%
 \begin{rep@theorem}}%
 {\end{rep@theorem}}}
\newtheorem{thm}{Theorem} 
\newtheorem{lem}{Lemma}
\newtheorem{define}{Definition}
\newtheorem{conj}{Conjecture}
\renewcommand{\ALG@name}{Procedure}
\renewcommand{\listalgorithmname}{List of \ALG@name s}
\journal{ }
\newcommand{\morphosys}{{\sc{MorphoSys}}}
\newcommand{\MORPHOSYS}{{\normalsize\bf{\large\bf{M}}ORPHO{\large\bf{S}}YS}}
\begin{document}

\begin{frontmatter}



\title{MORPHOSYS: Efficient Colocation of QoS-Constrained Workloads in the Cloud\tnoteref{t1}}

\author[vi]{Vatche Ishakian}
\ead{vishakian@bentley.edu}
\author[fac]{Azer Bestavros}
\ead{Best@bu.edu}
\author[fac]{Assaf Kfoury}
\ead{kfoury@bu.edu}


\address[vi]{Bentley University, Computer Information Systems, Waltham, MA}
\address[fac]{Boston University, Computer Science Department, Boston, MA}
\tnotetext[t1]{\scriptsize Initial version of this work was published in \cite{ishakian2012morphosys}}

\begin{abstract}
 In hosting environments such as IaaS clouds, desirable application
  performance is usually guaranteed through the use of Service Level
  Agreements (SLAs), which specify minimal fractions of resource
  capacities that must be allocated for  use for proper 
  operation. Arbitrary colocation of applications with different SLAs
  on a single host may result in inefficient utilization of the host's
  resources. In this paper, we propose that periodic resource
  allocation and consumption models 
  be used for a more granular expression of
  SLAs. Our proposed SLA model has the salient feature that it exposes
  flexibilities that enable the IaaS provider to safely
  transform SLAs from one form to another for the purpose of achieving
  more efficient colocation.  Towards that goal, we present
  \morphosys: a framework for a service that allows the manipulation
  of SLAs to enable efficient colocation of workloads. We present results from extensive trace-driven
  simulations of colocated Video-on-Demand servers in a cloud
  setting. The results show that potentially-significant reduction
  in wasted resources (by as much as 60\%) are possible using
  \morphosys.
\end{abstract}
\begin{keyword}
Resource Management\sep Service Level Agreements \sep Cloud
\end{keyword}

\end{frontmatter}


\section{Introduction}
\noindent {\bf Motivation:} {\em Cloud computing} in general and
  Infrastructure as a Service (IaaS) in particular have emerged as
  compelling paradigms for the deployment of distributed applications
  and services on the Internet due in large part to the maturity and wide  adoption of virtualization. 

From the perspective of an IaaS customer, this paradigm shift presents
  both an opportunity and a risk. On the one hand, deploying
  applications in the cloud is attractive because it enables
  efficiency through elastic scaling. On the
  other hand, deploying applications in the cloud implies
  relinquishing QoS monitoring and control to the cloud. Mitigating
  that risk requires the establishment of a ``contract'' -- a Service
  Level Agreement (SLA) -- between the provider and the customer,
  which spells out minimal resource allocations that the customer
  believes would satisfy desirable QoS constraints, while also being
  verifiable through measurement or auditing of allocated
  resources. Indeed, providing trustworthy accountability and
  auditing features have been cited as key attributes that would
  increase cloud adoption \cite{haeberlen2009case,sripanidkulchai2009clouds}. 

From the perspective of an IaaS provider, the cloud value proposition is
  highly dependent on efficient resource management
  \cite{Wood2009Memory,podzimek2015analyzing}, reduced operational costs \cite{Cardosa2009Shares} and on
  improved scalability \cite{Meng2010Improving}. Such efficiencies
  need to be achieved while satisfying the aforementioned
  contractually-binding customer SLAs. This necessitates that SLAs be
  spelled out in such a way so as to expose potential flexibilities
  that enable efficient mapping of physical resources to virtualized
  instances.

Given the wide range of applications currently supported in an IaaS
  setting, 
  it would be impractical for an IaaS provider to support
  special-purpose SLAs that are tailor-made for each such application
  and service, and which can be efficiently audited. Rather, a more
  practical approach calls for the development of a common language
  for expressing SLAs -- a language that would cater well to the
  widely different types of applications that are likely to be
  colocated on an IaaS infrastructure.

Currently, the de-facto language for expressing SLAs mirrors how
  virtual machines are provisioned -- namely through the specification
  of resource capacities to be allocated on average, over fairly long
  time scales.  While appropriate for many applications, such coarse
  SLAs do not cater well to the needs of applications that require
  resource allocations at a more granular scale.  
To elaborate, recent studies have documented the often unacceptable or degradable
  performance of a number of application classes in a cloud
  setting. Examples include latency-sensitive, interactive, web applications \cite{li2014impact}, image
  acquisition applications, IP telephony and streaming applications \cite{barker2010empirical,NetflixProblem}. 
  A culprit for the degraded performance is lack of any guarantee associated with the time-scale of resource allocation in a   virtualized environment \cite{baset2012cloud}. Indeed, to provide QoS features, which are becoming the differentiating elements between cloud computing environments, 
  there is the need for finer-grain SLA specifications that enable applications to spell out their resource   needs over arbitrary time scales, as well as any tolerable   deviations thereof (flexibilities).

Recognizing this need, in this paper we propose an expressive periodic
  resource allocation model for the specification of SLAs -- a model
  that on the one hand provides customers with a larger degree of
  control over the granularity of resource allocation, and on the
  other hand enables providers to leverage flexibilities in customers'
  SLAs for the efficient utilization of their infrastructures. Our SLA
  model is equally expressive for traditional cloud application as
  well as for the aforementioned QoS-constrained applications; it enables
  providers to cater to a wider customer base while providing them
  with the requisite measurement and auditing
  capabilities. 

 \noindent {\bf Scope and Contributions:} Given a set of applications
  (workloads), each of which specified by minimal resource utilization
  requirements (SLAs), the problem we aim to address is that of
  colocating or mapping  these workloads efficiently to physical resources. To
  achieve efficient mapping, we need to provide workloads with the
  ability to express potential flexibilities in satisfying their
  SLAs. 
Recognizing that there could
  be the case where there are multiple, yet functionally equivalent
  ways to express the resource requirements of a QoS-constrained
  workload. Towards that end, we propose a specific model for SLAs
  that makes it possible for providers to rewrite such SLAs as long as
  such rewriting is safe. By safety, we indicate that we can substitute the original SLA by the rewritten SLA without violating the original SLA, and that the resources allocations that satisfy the rewritten SLA would also provably satisfy the original SLA. The ability to make such safe SLA
  transformations enables providers to consider a wider range of
  colocation possibilities, and hence achieve better economies of
  scale. In that regard, we present \morphosys:\footnote{\morphosys\
  can be seen as catalyzing the ``morphosis'' of a set of SLAs --
  namely, morphing SLAs to enable more efficient colocation.} the
  blueprints of a colocation service that demonstrates the premise of
  our proposed framework. Results from extensive trace-driven simulations
  of colocated Video-on-Demand (VOD) servers in a cloud setting
  show that potentially-significant reduction in wasted resources (by
  as much as 60\%) are possible using
  \morphosys.

\noindent {\bf Paper Overview:} The remainder of this paper is
  organized as follows. In Section \ref{sec:background} we present
  some background and illustrative examples that motivate the need for an expressive safe SLA model. In Section \ref{sec:basics} we introduce our basic type-theoretic model for periodic resource supply and demand with necessary notation, basic definitions, and a series of safe transformations as exemplars of our notion of safe SLA rewrite rules.  In Section \ref{sec:SLAModel}, we extend our SLA model for
  QoS-constrained resource supply and demand. In section \ref{sec:DynamicService}, we present the
  basic elements of our \morphosys\ framework. In Section
  \ref{sec:experimental}, we present experimental results that
  demonstrate the promise from using \morphosys\ to manage colocated
  streaming servers. We review related work in Section
    \ref{sec:relatedwork}, and provide a conclusion in Section \ref{sec:conclusion}.

\section{background \lowercase{and} illustration} \label{sec:background}

Recall that an important consideration for efficient colocation is the
  ability of a provider to evaluate whether a given set of customers
  can be safely colocated. To do so, a provider must be able to decide
  whether the capacity of a given set of resources ({\em e.g.,} a
  host) can satisfy the aggregate needs of a set of customers (namely,
  the composition of the SLAs for that set of customers).
Given our
  adopted periodic model for SLA specification, it follows that
  evaluating the feasibility of colocating a set of customer workloads
  on a given host can be viewed as a ``schedulability'' problem: given
  the capacity of a host, are a set of periodic real-time tasks
  schedulable? 

Different models and schedulability analysis techniques have been
  proposed in the vast real-time scheduling theory, including Earliest
  Deadline First, Rate Monotonic Analysis (RMA)
  \cite{Liu1973Scheduling}, 
among others \cite{R:Davis:2009d}. While similar in terms of
  their high-level periodic real-time task models, these approaches
  differ in terms of the trade-offs they expose vis-a-vis the
  complexity of the schedulability analysis, the nature of the
  underlying resource manager/scheduler, and the overall achievable
  system utilization. Without loss of generality, we assume that RMA
  \cite{Liu1973Scheduling} is the technique of choice when evaluating
  whether it is possible to co-locate a set of periodic workloads on a
  fixed-capacity resource.\footnote{While the analysis and transformations we provide in this paper are based
  on RMA, we emphasize that our framework and many of our results naturally
  extend to other types of schedulers.}

Liu and Layland \cite{Liu1973Scheduling} provided the following
  classical result for the schedulability condition of $n$
  tasks (SLAs), each of which requiring the use of a resource for $C_i$
  out of every $T_i$ units of time, under RMA:

	$$\mbox{\fontsize{9}{9}\selectfont $\displaystyle U = \sum_{i=1}^{n} \frac{C_i}{T_i} \leq n(\sqrt[n]{2} - 1)$}$$

  Follow-up work, by Lehoczky {\em et al} \cite{Lehoczky1989Rate} 
  showed that by grouping tasks in $k$ clusters such that the periods of tasks in each cluster are multiples of each other
({\em i.e.}, harmonic), a tighter schedulability condition is possible -- namely:

$$\mbox{\fontsize{9}{9}\selectfont $\displaystyle U = \sum_{i=1}^{n} \frac{C_i}{T_i} \leq k(\sqrt[k]{2} - 1)$}$$

As motivated above, there may be multiple yet functionally-equivalent
  ways to satisfy a given SLA. This flexibility could be leveraged by
  a provider for efficient colocation. In particular, given a set of
  periodic tasks (SLAs), it might be possible to obtain clusters of
  tasks with harmonic periods by manipulating the period $C_i$ or the periodic allocation $T_i$ of some
  of the tasks (SLAs) in the set. For such a transformation to be
  possible, we must establish that it is {\em safe} to do so.
\begin{figure}[h]
  \centering
  \includegraphics[width=0.65\textwidth]{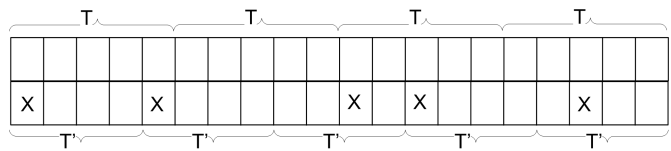}
  \caption{\label{expfigure} Illustration: Reducing the allocation period may result in missed deadlines.}
\end{figure}
To illustrate why a transformation may not be safe (even if it results in an increase in the fraction of the resource alloted to the task), consider a workload that requires $C=1$ time units of the resource every period $T=5$ time units. While reducing the allocation period for this task from $T=5$ to $T'=4$ would result in that task being alloted the resource for a larger fraction of time (25\% as opposed to 20\%), as shown in Figure \ref{expfigure}, it is possible for that task to miss its original deadlines. The figure shows a sequence of allocation intervals of size $T$ and $T'$, where a box represents a single allocation period. The upper row shows the periodic boundaries as originally specified $(T=5)$, whereas the lower row shows a periodic allocation with $(T'=4)$, with ``X'' marking the times when the resource is allocated. One may observe that an entire period of size $T$ is missing its required allocation.

In the above example, the fact that the transformation we considered resulted (or may result) in missed deadlines does not mean that it cannot be used. In particular, if the SLA associated with the workload in question allows for some percentage of deadline misses, then if one is able to bound the deadline misses resulting from the transformation -- and consequently show that the SLA is not violated -- then the transformation is indeed safe. Thus the need for a concise workload model along with a formalism for studying such safe transformations.

\section{SLA Model: Basics} \label{sec:basics}
\noindent
As we established earlier, SLAs can be seen as encapsulators of
  the resources supplied by hosts (producers) and demanded by
  tasks (consumers). While this concept is generic enough for a
  wide variety of resources, in this section, we provide a specific model for SLAs --
  namely, one that supports periodic, real-time resource supply
  and demand.\footnote{Legal implications that specify how penalties are associated with SLA violations are considered to be out of scope of our work.} 
We also provide the basic type-theoretic-inspired definitions that  allow us to establish subtyping relationships between SLAs.

Although our SLA formulation and subtyping relationships is abstract enough to reflect an aggregated set of resources -- in a public or private IaaS setting -- such as a single rack in the datacenter, throughout this paper, we assume that our SLAs reflect the resources provided by a single physical host.


\subsection{Periodic Supply/Demand SLA Types}
This section presents the formal definition of SLA types for resources supplied and demanded for a specific allocation interval. It also denotes the maximum number of missed allocations over multiple intervals.

\begin{define}
A Service Level Agreement (SLA) type $\tau$ is defined as a
  quadruple of natural numbers $(C,T,D,W)$, $C \leq T$, $D
  \leq W$, and $W \geq 1$, where $C$ denotes the resource
  capacity supplied or demanded in each allocation interval $T$,
  and $D$ is the maximum number of times such an allocation is
  not possible to honor in a window consisting of $W$
  allocation intervals.
\end{define}

As is common in the real-time literature, the above definition assumes that the periodic capacity could be allocated as early as the beginning of any interval (or period) and must be completely produced/consumed by the end of that same interval ({\em i.e.}, allocation deadline is T units of time from the beginning of the period).


The concept of SLA types is general enough to capture the
  various entities in a resource allocation. The
  following are illustrative examples.

An SLA of type $(1,1,0,1)$ could be used to characterize a
  uniform, unit-capacity supply provided by a physical host. An
  SLA of type $(1,n,0,1)$, $n > 1$ could be used to characterize
  the fractional supply provided under a General Processor
  Sharing (GPS) model to $n$ processes. 
 In the above examples,
  the SLA type does not admit missed allocations (by virtue of
  setting $D=0$).

An SLA of type $(1,30,0,1)$ could be used to represent a task that
needs a unit capacity $C=1$ over an allocation period $T=30$
  and cannot tolerate any missed allocations. An SLA of type $(1,30,2,5)$ is similar in its
  periodic demand profile except that it is able to tolerate
  missed allocations as long as there
  are no more than $D=2$ such misses in any window of $W=5$
  consecutive allocation periods.

\subsection{Satisfaction and Subtyping of SLAs}
We begin by providing basic definitions of what it means to satisfy a schedule for SLAs of the type $(C,T,0,1)$ ({\em i.e.}, those that do not admit missed allocations), which we denote using the shorthand $(C,T)$. Next, we generalize these definitions for general SLA types of the form $(C,T,D,W)$.

\begin{define}
  A schedule $\alpha$ is a function from $\mathbb{N}$ to $\{0,1\}$ as $\alpha:\mathbb{N} \rightarrow \{0,1\}$
 \end{define}

 A schedule $\alpha$ {\em satisfies} (denoted by $\vDash$) an SLA type $(C,T)$ if the resource is allocated for $C$ units of time in non-overlapping intervals of length $T$.

\begin{define}
  $\alpha \vDash (C,T)$ iff for every $Q \geq 0$ where $Q \in \mathbb{N}$ and every $m = Q \times T$ we have $\alpha(m) + \cdots + \alpha(m + T -1) \geq C$
  \end{define}

There are multiple ways for an allocation to satisfy a particular schedule. Thus, we define the set $\textlbrackdbl (C,T) \textrbrackdbl$ to consists of all schedules that satisfy $(C,T)$.

\begin{define}
  $\textlbrackdbl (C,T) \textrbrackdbl = \{ \alpha:\mathbb{N} \rightarrow  \{0,1\}\;|\;\alpha \vDash (C,T) \}$
\end{define}

We are now ready to introduce SLA {\em subtyping} relationships (denoted by $\lhd$) as follows:
\begin{define}
  $(C,T) \lhd (C',T')$ iff $\textlbrackdbl (C,T) \textrbrackdbl  \subseteq \textlbrackdbl (C',T') \textrbrackdbl$
\end{define}

We generalize the above definitions for SLAs of the type $(C,T,D,W)$, which allow missed allocations. The definitions are conceptually similar to SLAs of type (C,T), though considerations of missed allocations require  more elaborate notations.

To calculate the number of missed allocations over $W$ intervals, we need to identify whether a single interval of size $T$ is satisfied. Formally,
\begin{define}
  $\mathbf{A}_{\alpha, C,T}$ is a function from $\mathbb{N}$ to $\{0,1\}$.  $\mathbf{A}_{\alpha, C,T}:\mathbb{N} \rightarrow \{0,1\}$ such that:
$$
  \mathbf{A}_{\alpha, C,T}(m) = \left\{
  \begin{array}{rl}
    0 &\mbox{ if $\alpha(m) + \cdots + \alpha(m + T -1) < C$} \\
    1 &\mbox{ if $\alpha(m) + \cdots + \alpha(m + T -1) \geq C$}
  \end{array} \right.
$$
\end{define}

A schedule $\alpha$ {\em satisfies} an SLA type $(C,T,D,W)$ if the resource is allocated for $C$ units of time in at least $W-D$ out of every $W$ intervals of length $W \times T$.

\begin{define}
    $\alpha \vDash (C,T,D,W)$ iff for every $Q \geq 0$ where $Q \in \mathbb{N}$ and every $m = Q \times W \times T$
      \begin{align*}
        &\mathbf{A}_{\alpha, C,T}(m) + \mathbf{A}_{\alpha, C,T}(m+T) + \cdots +
      \mathbf{A}_{\alpha, C,T}(m+ (W-1) \times T) \geq W-D
      \end{align*}
\end{define}

The set $\textlbrackdbl (C,T,D,W) \textrbrackdbl$ is defined as comprising all schedules that satisfy $(C,T,D,W)$. Formally:

\begin{define}
    $\textlbrackdbl (C,T,D,W) \textrbrackdbl = \{\alpha:\mathbb{N} \rightarrow \{0,1\}\;|\;\alpha \vDash(C,T,D,W) \}$.
\end{define}

A generalized notion of {\em subtyping} for SLAs of the form $(C,T,D,W)$ is presented as follows:
\begin{define}
  $(C,T,D,W) \lhd (C',T',D',W')$ iff $\textlbrackdbl (C,T,D,W) \textrbrackdbl  \subseteq \textlbrackdbl (C',T',D',W') \textrbrackdbl$.
\end{define}

\subsection{SLA Subtyping and Transformations} \label{Characterize}
\noindent
In this section, we present a set of SLA transformations that exemplify the range of scheduling results that could be ``coded into'' the \morphosys\ framework. Each one of the transformations presented in this section is cast within a subtyping theorem. Intuitively, establishing a subtyping relationship between two SLAs implies that we can {\em safely} substitute one for the other 
In general, transformed SLAs will require more of the underlying physical resources than the original SLA. Nevertheless, such a transformation may be advantageous to the IaaS provider as it may result in a more efficient colocation of customer workloads -- {\em e.g.}, by making such workloads harmonic and hence subject to looser schedulability bounds \cite{Lehoczky1989Rate}.

Given an SLA of type $(C',T')$, it is possible to safely transform it into another SLA of type $(C,T)$, where $T$ is larger or smaller than $T'$. 

\begin{thm}\footnote{Most of the proofs in this paper were verified using the theorem prover of \cite{Ishakian2011Formal}.}\label{thm2}
  $(C,T) \lhd (C',T')$ iff one of the following conditions holds:
    \begin{enumerate}
        \item $T \leq T'/2$ and $C \geq C'/(K-1)$ where $K = \lfloor T'/T \rfloor$.
        \item $T > T'$ and $C \geq T - (T' - C')/2$.
        \item $T'/2 < T \leq T'$ and $T - (T' - C')/3 \leq C$.
    \end{enumerate}
\end{thm}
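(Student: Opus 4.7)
The plan is to handle each of the three cases by combining a geometric lemma about how an aligned $T'$-window $W_m = [mT', (m+1)T')$ overlaps the grid of aligned $T$-windows with a tight worst-case analysis of the allocation count $N(\alpha,m) = \sum_{i=mT'}^{(m+1)T'-1} \alpha(i)$ over schedules $\alpha \vDash (C,T)$. By definition, $(C,T) \lhd (C',T')$ holds iff $N(\alpha,m) \geq C'$ for every such $\alpha$ and every $m$, so the task reduces to computing $\min_{\alpha,m} N(\alpha,m)$ in each regime and characterizing when this minimum is $\geq C'$.

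The key local estimate is that a $T$-window fully contained in $W_m$ contributes at least $C$ to $N(\alpha,m)$, while a $T$-window whose intersection with $W_m$ has length $\ell$ contributes at least $\max(0,\, C-(T-\ell))$, since an adversary can hide at most $T-\ell$ of its $C$ allocated slots in the part of that $T$-window lying outside $W_m$. In case~1 ($T \leq T'/2$), a divisibility argument on $mT' \bmod T$ shows $W_m$ contains at least $K-1$ fully aligned $T$-windows, so $N(\alpha,m) \geq (K-1)C$; arranging the two partials to contribute $0$ makes this bound tight, yielding exactly $C \geq C'/(K-1)$. In case~2 ($T > T'$), $W_m$ has length below $T$ and overlaps at most two consecutive $T$-windows, into which the adversary may push at most $2(T-C)$ unallocated slots, giving $N(\alpha,m) \geq T' - 2(T-C)$ and hence $C \geq T - (T'-C')/2$. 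In case~3 ($T'/2 < T \leq T'$), $W_m$ has length in $[T, 2T)$ and overlaps at most three consecutive $T$-windows; the same accounting bounds the adversarial slack by $3(T-C)$, yielding $C \geq T - (T'-C')/3$.

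For the converse direction in each case, whenever the stated inequality fails I would construct an explicit $\alpha \vDash (C,T)$ together with an index $m$ witnessing $N(\alpha,m) < C'$: place the $C$ allocated slots of each $T$-window at one extreme of that window so that, for the chosen $m$, the unallocated slots of the overlapping $T$-windows concentrate inside $W_m$ and the allocated slots spill outside. The main obstacle I anticipate is guaranteeing that this adversarial construction is globally consistent --- a single schedule $\alpha:\mathbb{N} \to \{0,1\}$ must satisfy $(C,T)$ on every $T$-window in $\mathbb{N}$ while simultaneously minimizing $N(\alpha,m)$ at the targeted $m$. This hinges on identifying the offset $s = mT' \bmod T$ that realizes the worst-case alignment: in case~1, an $m$ for which $W_m$ contains only $K-1$ fully aligned $T$-windows; in cases~2 and~3, one for which both partial-overlap lengths are at least $T-C$ so the full $2(T-C)$ or $3(T-C)$ slack budget is attainable.
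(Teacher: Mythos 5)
Your sufficiency argument is essentially the paper's: the same three geometric regimes, the same count of $K-1$ fully contained $T$-windows inside an aligned $T'$-window in Case~1 (the paper isolates this as an appendix lemma), and the same slack accounting of $2(T-C)$ and $3(T-C)$ over the two or three straddling $T$-windows in Cases~2 and~3. Your local estimate $\max(0,\,C-(T-\ell))$ for a $T$-window meeting $W_m$ in length $\ell$ is a cleaner formalization of what the paper reads off its figures, but the content is identical, and this half of the proposal is sound.

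The genuine gap is in the converse, and the obstacle you flag at the end is not a technicality you can expect to discharge --- it is fatal. Both window grids are anchored at $0$, so the offset $s = mT' \bmod T$ ranges only over the multiples of $\gcd(T,T')$ in $[0,T)$, and the worst-case alignment your adversarial construction needs may never occur. Concretely, take $(C,T)=(1,2)$ and $(C',T')=(2,4)$: every aligned $4$-window is the disjoint union of exactly two aligned $2$-windows, so $(1,2)\lhd(2,4)$ holds, yet Condition~1 (the only applicable one) demands $C \geq C'/(K-1) = 2$ and fails; Case~2 admits similar counterexamples, e.g.\ $(3,4)\lhd(1,2)$ holds while Condition~2 requires $C\geq 3.5$. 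So the ``only if'' direction is false as stated and no witnessing schedule exists for you to construct. Be aware that the paper's own ``[Only If]'' paragraphs do not establish necessity either --- each assumes the arithmetic condition and re-derives $(C,T)\lhd(C',T')$, i.e.\ repeats the sufficiency argument --- so your proposal matches everything the paper actually proves, and your plan for the converse, though unfulfillable, at least correctly locates where the difficulty lies: in whether the extremal offset is realizable.
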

\begin{proof}
\noindent{\em Condition 1:} [If] $T \leq T'/2$ implies that $K \geq
  2$. According to Lemma \ref{k1overlap} (see appendix), an
  interval of length $T$ must overlap with at least $K-1$ fixed
  intervals of length $T'$. These $K-1$ intervals provide an
  allocation of $(K-1) \times C'$, enough for interval $T$. Thus,
  $(K-1)  \times  C' \geq C $ and $C' \geq C/(K-1)$.

\noindent{\em Condition 2:} [If] Consider any interval $I'$ of length $T'$. Since $T > T'$,
  either $I'$ will be completely overlapped by an
  interval $I$ of length $T$, or it will be overlapped by two
  intervals $I_1$, $I_2$ of length $T$ (as shown in Figure \ref{weakcase2}).
  For any interval $I$ of length $T$, denote the
  left and right boundaries of $I$ using $l(I)$ and
  $r(I)$, respectively.

 \begin{figure*}[htp]
  \begin{minipage}{0.47\linewidth}
  \includegraphics[width=1\textwidth]{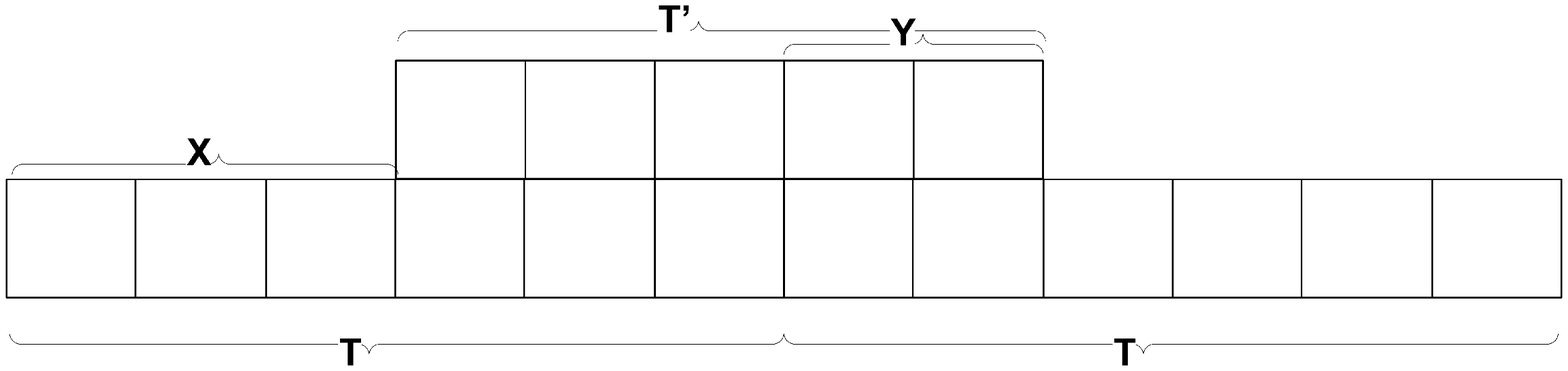}\\ \vspace{-0.25in}
  \caption{\label{weakcase2}$T'$ is overlapped by two intervals of size $T$}
  \end{minipage}
  \begin{minipage}{0.06\linewidth}
   $\quad$
  \end{minipage}
    \begin{minipage}{0.47\linewidth}
  \includegraphics[width=1\textwidth]{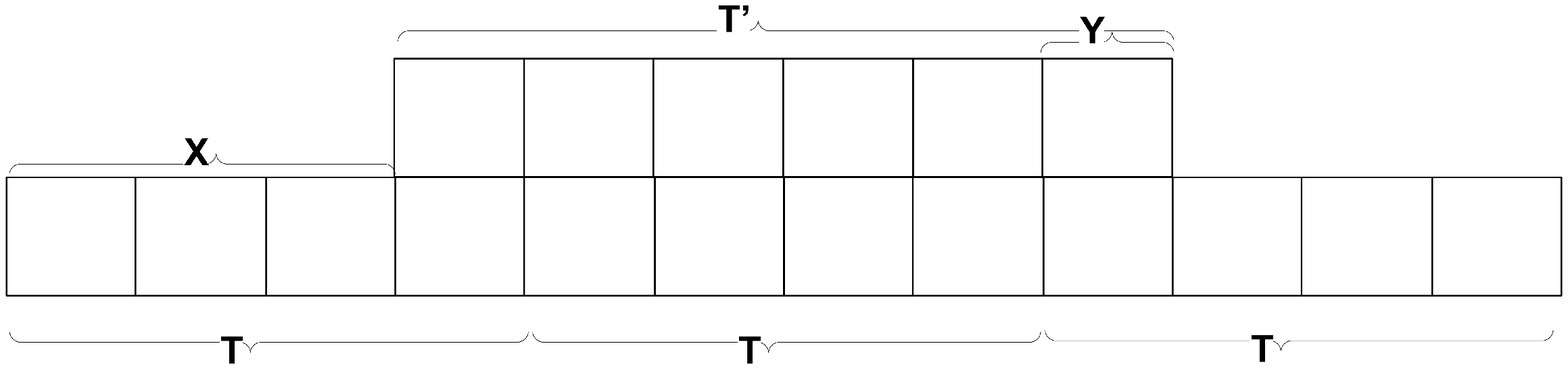}\\ \vspace{-0.25in}
  \caption{\label{overlapsthreewinds}$T'$ is overlapped by three intervals of size $T$}
    \end{minipage}
\end{figure*}

  Let $x$ be the offset of $l(I')$ from $l(I_1)$ and $y$ be the offset of
  $r(I')$ from $l(I_2)$. We observe that $(T' - y) + x =
  T$, leading to $C \geq 1/2(x +
  (T-y) + C')$ and $C \geq T - (T' - C')/2$ as a
  sufficient condition.

\noindent{\em Condition 3:} [If] Consider any interval $I'$ of
  length $T'$. Since $T'/2 < T \leq T'$, $I'$ will overlap with
  either two or three intervals of length $T$. The case in which
  $I'$ overlaps two intervals of length $T$ follows from
  Condition 2, resulting in $C \geq (2T' - T + C')/2$. The case
  in which $I'$ overlaps three intervals $I_1$, $I_2$, and $I_3$
  of length $T$ is shown in Figure \ref{overlapsthreewinds}.

  Let $x$ be the offset of $l(I')$ from $l(I_1)$ and $y$ be the
  offset of $r(I')$ from $l(I_3)$. We observe that $(T' - y) + x
  =2T$. Thus, a sufficient condition is $C \geq 1/3(x + (T-y) + C)$ and
  $C \geq T - (T' - C')/3$.
  Thus a bound for both cases is the maximum of the two
  bounds, namely $C \geq T - (T' - C')/3$.

\noindent{\em Condition 1:} [Only If] Suppose $\alpha \vDash
  (C,T)$. This implies that for every $Q \geq 0$ and every $m =
  Q \times T$: $\alpha(m) + \cdots + \alpha(m + T -1) \geq C$. Given
  that $K  \times  T \leq T'$ and $T \leq T'/2$, we have $K \geq 2$ and
  $(K-1)  \times  T < T'$. Given that $(K-1)  \times  C \geq C'$, we get:
  $\alpha(m) + \cdots + \alpha(m + ((K  -1)  \times  T) - 1) \geq K - 1  \times  C \geq C' $
  Therefore $(C,T) \lhd (C',T')$.

 \noindent{\em Condition 2:} [Only If] Suppose $\alpha \vDash
  (C,T)$. This implies that for every $Q \geq 0$ and every $m =
  Q \times T$: $\alpha(m) + \cdots + \alpha(m + T -1) \geq C$. Consider $m' = Q'  \times  T'$ for $Q' \geq 0$. In the worst case, it will will be overlapped by two intervals of length $m$. Thus we have $T' = 2T - x - x'$ where $x$ and $x'$ are the left overs from both intervals. Therefore $\alpha(m) + \cdots + \alpha(m + T - 1)  + \alpha(m) + \cdots + \alpha(m + 2T - 1 -x - x') \geq 2  \times  C -x -x' \geq  2C + T' - 2T  \geq C'$ Therefore $(C,T) \lhd (C',T')$.

 \noindent{\em Condition 3:} [Only If] Suppose $\alpha \vDash
  (C,T)$. This implies that for every $Q \geq 0$ and every $m =
  Q \times  T$: $\alpha(m) + \cdots + \alpha(m + T -1) \geq C$. Consider $m' = Q'  \times  T'$ for $Q' \geq 0$. In the worst case, it will will be overlapped by three intervals of length $m$. Thus we have $T' = 3T - x - x'$ where $x$ and $x'$ are the left overs from both intervals. Therefore $\alpha(m) + \cdots + \alpha(m + T - 1)  + \alpha(m) + \cdots + \alpha(m + 3T - 1 -x - x') \geq 3  \times  C -x -x' \geq  2C + T' - 3T  \geq C'$ Therefore $(C,T) \lhd (C',T')$.
  \end{proof}

We extend results from Theorem \ref{thm2} to provide safe transformations of SLAs of the type $(C,T,D,W)$. Theorem \ref{thm15} outline the necessary conditions to safely transform an SLA of type $(C',T',D',W')$ into $(C,T,D,W)$.

\begin{thm} \label{thm15} $(C,T,D,W) \lhd (C',T',D',W')$ if
  one of the following conditions holds:
\begin{enumerate}
  \item ($T \leq T'/2$ and $C \geq  C'/(K-1)$) and $D
    \leq D'/2$ and $W \geq D  \times  W'/D'  \times  (K + 1)$ where $K =
    \lfloor T'/T \rfloor$.
  \item ($T > T'$ and $C \geq T - (T' - C')/2$) and $D \leq
    D'/2K$ and  $W \geq D  \times
    W'/D'  \times  (K + 1)$ where $K = \lfloor T'/T \rfloor$.
  \item ($T'/2 < T \leq T'$ and $T - (T' - C')/3 \leq C$)
    and $D \leq D'/2$ and  $W \geq 2  \times  D  \times  W'/D'$.
    \end{enumerate}
\end{thm}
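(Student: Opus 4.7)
The strategy is to lift Theorem~\ref{thm2} from a single-interval statement to the windowed, miss-tolerant setting by quantifying how individual miss events propagate between the two time granularities. Fix a schedule $\alpha \vDash (C,T,D,W)$; for every $Q' \geq 0$ and $m' = Q' \cdot W' \cdot T'$ I aim to show that the window starting at $m'$ contains at most $D'$ unsatisfied $T'$-intervals, which is precisely $\alpha \vDash (C',T',D',W')$. The proof is by a three-way case split matching the three cases of Theorem~\ref{thm15}, with a shared template in each case.

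The first ingredient is a \emph{single-miss propagation} claim obtained by contraposing the corresponding case of Theorem~\ref{thm2}. Because the capacity hypotheses in each case of Theorem~\ref{thm15} mirror those of Theorem~\ref{thm2} exactly, we can conclude that whenever a $T'$-interval $I'$ is unsatisfied, at least one of the $T$-intervals identified by the geometric argument of Theorem~\ref{thm2} must itself be unsatisfied -- the $K-1$ fully contained $T$-intervals in case~1, the one or two covering $T$-intervals in case~2, and the two or three overlapping $T$-intervals of Figures~\ref{weakcase2} and~\ref{overlapsthreewinds} in case~3. The second ingredient is a \emph{damage bound} in the opposite direction: a single missed $T$-interval can force at most $\lambda$ $T'$-intervals to be missed, where $\lambda = 2$ when $T \leq T'$ (cases~1 and~3) because such a $T$-interval crosses at most one $T'$-boundary, and $\lambda$ scales with $\lceil T/T' \rceil$ in case~2. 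Put together, the number of $T'$-misses in any time range is bounded pointwise by $\lambda$ times the number of $T$-misses in the same range.

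To finish, I bound the number of $T$-misses inside the time span $[m',\, m'+W'T')$. This span meets at most $\lceil W'T'/(WT)\rceil + 1 \leq W'T'/(WT) + 2$ consecutive $W$-windows of the $(C,T,D,W)$ schedule, each contributing at most $D$ misses, so the total is at most $D \cdot \bigl(W'T'/(WT) + 2\bigr)$. Substituting $T'/T \leq K+1$ and the hypothesis $W \geq (K+1)DW'/D'$ for cases~1 and~2 (respectively $W \geq 2DW'/D'$ for case~3), together with the matching constraint $D \leq D'/2$ (or $D \leq D'/(2K)$ in case~2), makes $\lambda \cdot D \cdot \bigl(W'T'/(WT) + 2\bigr) \leq D'$, which is the required bound on the number of $T'$-misses. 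The step I expect to require the most care is the bookkeeping at window boundaries: the $W$-windows of the fine-grained SLA are anchored at multiples of $WT$ while the $W'$-window under analysis is anchored at a multiple of $W'T'$, so miss clusters that straddle an internal $W$-boundary must be absorbed by the additive slack in the count, which is exactly what the $K+1$ factor in the hypothesis on $W$ is designed to supply.
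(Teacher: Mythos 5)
Your overall strategy---propagate misses between the $T$-grid and the $T'$-grid in both directions and then count---is a reasonable way to attack a sufficiency claim of this kind, and it is in fact more disciplined than the paper's own argument, which only sketches why each constraint on $D$ and $W$ ``must hold'' (leaning on the necessity of $D/W \leq D'/W'$ from the appendix lemma) rather than verifying that every schedule in $\textlbrackdbl (C,T,D,W)\textrbrackdbl$ lies in $\textlbrackdbl (C',T',D',W')\textrbrackdbl$. However, your final arithmetic does not close, and this is a genuine gap rather than a presentational one. Take case 1 and follow your own chain of inequalities: the span $[m', m'+W'T')$ meets at most $W'T'/(WT)+2$ windows of the fine SLA, each contributing at most $D$ misses; using $T'/T \leq K+1$ and the hypothesis $W \geq (K+1)DW'/D'$ gives $W'T'/(WT) \leq D'/D$, so the number of $T$-misses is at most $D(D'/D + 2) = D' + 2D \leq 2D'$ after invoking $D \leq D'/2$. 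Multiplying by your damage factor $\lambda = 2$ yields a bound of $4D'$ on the number of unsatisfied $T'$-intervals, not the required $D'$. Even with the sharper observation that in case 1 each unsatisfied $T$-interval is contained in at most one $T'$-interval (so $\lambda = 1$), you still land at $2D'$. The inequality you assert at the end, $\lambda \cdot D \cdot (W'T'/(WT)+2) \leq D'$, simply does not follow from the stated hypotheses; the slack you attribute to the $K+1$ factor is already consumed in bounding $W'T'/(WT)$ by $D'/D$ and cannot also absorb the factors of $\lambda$ and the additive $2D$.

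To repair this you would need a strictly stronger charging argument than the one you contrapose from Theorem \ref{thm2}---for instance, showing that an unsatisfied $T'$-interval forces \emph{several} unsatisfied $T$-intervals (in case 1 it forces all $K$ aligned $T$-subintervals to miss, not just one), or exploiting the alignment of the two grids to avoid the ``$+2$'' boundary windows---and you would need to say explicitly how $K$ is to be read in case 2, where $T > T'$ makes $\lfloor T'/T\rfloor = 0$ and the hypothesis $D \leq D'/2K$ literally divides by zero (the paper's proof tacitly switches to $K = \lfloor T/T'\rfloor$ there, as you also do implicitly via $\lceil T/T'\rceil$). As it stands, your argument establishes membership in $\textlbrackdbl (C',T',cD',W')\textrbrackdbl$ for some constant $c>1$, which is weaker than the theorem.
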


\begin{proof} We use the fact from Lemma \ref{Necessary}
(see appendix) that $D/W \leq D'/W'$ is a necessary condition.

\noindent{\em Condition 1:} The proof for the bracketed part
  of the conjuntion is identical to that under Condition 1 of
  Theorem \ref{thm2}. For the remaining part, we note that since
  missed deadlines might be stacked at the end of one window and
  at the beginning of the next contributing to a window of size
  W, it follows that $D \leq D'/2$.  Also, since $K+1$
  consecutive intervals of length $T$ will span one interval of
  length $T'$, it follows that every missed interval of length
  $T$ out of $K+1$ intervals will result in missing an interval
  of length $T'$. Thus, $W \geq (K+1)  \times  (D  \times  W/D')$ must hold.

\noindent{\em Condition 2:} The proof for the bracketed part
  of the conjuntion is identical to that under Condition 2 of
  Theorem \ref{thm2}. For the remaining part, in the worst case,
  missing an interval of length $T$ results in missing up to $(K
  + 1)  \times  T'$ intervals, where $K = \lfloor T'/T \rfloor$. Thus
  $D \leq D'/(K + 1)$ must hold as well as $W \geq (K+1)  \times  (D  \times
  W/D')$. However, since missed deadlines might be stacked at
  the end of one window and at the beginning of the next
  contributing to a window of size W, it follows that $D \leq
  D'/2(K+1)$ must hold.

\noindent{\em Condition 3:} The proof for the bracketed part
  of the conjuntion is identical to that under Condition 3 of
  Theorem \ref{thm2}. For the remaining part, the proof is
  similar to that in Condition 2 by taking $K = \lfloor T'/T
  \rfloor$ and consequently $K = 1$. Thus, $W \geq (K+1)  \times  (D  \times
  W/D')$ must hold.
\end{proof}

Having characterized some basic notions of subtyping, we present additional transformations that allow for safe rewriting of such types.

\begin{thm} \label{lemKCKT}
Let $\tau = (KC, KT)$ be an SLA type for some $K \geq 1$ and $\tau' = (C, T)$ be a host-provided SLA type. Then $\tau' \lhd \tau$.
\end{thm}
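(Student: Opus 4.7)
The plan is to unfold the definitions and observe that a satisfying schedule for $\tau'=(C,T)$ automatically satisfies $\tau=(KC,KT)$ because every $KT$-aligned window decomposes cleanly into $K$ consecutive $T$-aligned windows. In other words, subtyping here is proved by showing set inclusion $\textlbrackdbl (C,T) \textrbrackdbl \subseteq \textlbrackdbl (KC, KT) \textrbrackdbl$ directly, without any subtle arithmetic of misalignment.

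Concretely, first I would fix an arbitrary schedule $\alpha \vDash (C,T)$ and unpack the definition: for every $Q \geq 0$ and $m = Q \times T$, the sum $\alpha(m) + \cdots + \alpha(m+T-1) \geq C$. Next, to establish $\alpha \vDash (KC, KT)$, I would take any $Q' \geq 0$ and set $m' = Q' \times KT$, and then split the block $[m', m'+KT)$ into the $K$ disjoint sub-blocks $[m' + jT,\, m' + (j+1)T)$ for $j = 0, 1, \ldots, K-1$. The crucial observation is that each sub-block starts at index $m' + jT = (Q'K + j) \times T$, which is a non-negative multiple of $T$, so the hypothesis $\alpha \vDash (C,T)$ applies to each sub-block and contributes at least $C$ allocations.

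Summing the $K$ contributions gives $\alpha(m') + \cdots + \alpha(m' + KT - 1) \geq K \times C$, which is exactly the condition $\alpha \vDash (KC, KT)$. Since this holds for every $Q' \geq 0$, we conclude $\alpha \in \textlbrackdbl (KC, KT) \textrbrackdbl$, and since $\alpha$ was arbitrary this yields $\textlbrackdbl (C,T) \textrbrackdbl \subseteq \textlbrackdbl (KC, KT) \textrbrackdbl$, i.e., $(C,T) \lhd (KC, KT)$ by definition of the subtyping relation.

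There is no real obstacle here: the argument is essentially a bookkeeping exercise that exploits the fact that $KT$ is an integer multiple of $T$ and that the periodic boundaries of $\tau$ are perfectly aligned with those of $\tau'$. The only minor thing to be careful about is the indexing — making sure the decomposition into $K$ sub-blocks uses sub-blocks that each begin at a multiple of $T$ (which holds precisely because $m'$ is itself a multiple of $T$). This is in sharp contrast to Theorem \ref{thm2}, where period boundaries do not align and one must account for overlap of unaligned intervals; the harmonic case $T \mid KT$ sidesteps that complication entirely.
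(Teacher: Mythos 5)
Your proof is correct and follows essentially the same route as the paper's: the paper's (much terser) argument likewise observes that one interval of $\tau=(KC,KT)$ decomposes into $K$ aligned intervals of $\tau'=(C,T)$, each contributing at least $C$. Your version simply makes the alignment of sub-block boundaries and the resulting set inclusion $\textlbrackdbl (C,T) \textrbrackdbl \subseteq \textlbrackdbl (KC,KT) \textrbrackdbl$ explicit.
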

\begin{proof}
One can observe that one interval of $\tau$ will contain $K$ intervals of $\tau'$ with each interval providing $C$ computation time. Thus $\tau$ is satisfied.
\end{proof}

Next we present transformations that allow for missed allocation, but unlike arbitrary SLA modifications, These transformations provide a bound on the number of missed deadlines over a specific number of intervals, which deemed to be acceptable to satisfy a customer's SLAs. We begin by presenting the definition of bounds on missed deadlines.

\begin{define}
    $(C',T') \lhd_{a,b} (C,T)$ where $a$ is the bound on the missed deadlines over $b$ intervals of length $T$.
\end{define}

Theorems \ref{MultiplyByK} and \ref{TransferCSame} provide bounds on the number of missed deadline as we modify the allocation interval $T$.

\begin{thm}\label{MultiplyByK}
Let $\tau = (C,T)$ be an SLA type, and $\tau' = (C', T')$ be a host-provided SLA type, where  $T' = KT$ for some $K > 1$ then:
\begin{enumerate}
  \item If $0 \leq C' < K  \times  (C-1) + 1$ then $\tau' \lhd_{a,b}
  \tau$, where $a = K \times T$ and $b = K$. Moreover in such a case,
  $\alpha$ will miss at least one allocation deadline every $K$ intervals.
  \item For every $J \in \{1,\ldots,K-1\}$, if
  \begin{align*} & K  \times  (C-1) + (J-1)  \times  (T - (C-1)) + 1 \\
    &\qquad\leq C' < K  \times  (C-1) + (J)  \times  (T - (C-1)) + 1
  \end{align*}
	then $\tau' \lhd_{a,b} \tau$, where $a = (K - J)$ and $b = K$
  \item For $J = K$, if \[K  \times  (C-1) + (J-1)  \times  (T - (C-1)) +
      1 \leq C' \leq  T'\] then $\tau' \lhd \tau$.
\end{enumerate}
\end{thm}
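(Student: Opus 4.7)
The plan is to reduce the theorem to a per-window combinatorial packing problem. Since $\tau'$ guarantees at least $C'$ units of allocation in each superinterval of length $T' = KT$, and the $K$ sub-intervals of length $T$ within such a superinterval are exactly the successive $\tau$-deadlines, it suffices to analyze a single superinterval. I classify each sub-interval as \emph{satisfied} (receives $\geq C$ units) or \emph{missed} (receives $\leq C-1$ units), and let $s$ denote the number of satisfied sub-intervals under an arbitrary adversarial placement of the $C'$ units.

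The key observation is a simple capacity argument: any placement with exactly $s$ satisfied sub-intervals uses at most $sT + (K-s)(C-1)$ units, obtained by saturating each satisfied sub-interval to its ceiling $T$ and each missed sub-interval to its ceiling $C-1$. Hence such a placement is feasible iff $sT + (K-s)(C-1) \geq C'$, equivalently iff $s \geq (C' - K(C-1))/(T-C+1)$. The adversary minimises $s$, so the worst-case number of satisfied sub-intervals per window is $\max\bigl(0,\lceil (C' - K(C-1))/(T-C+1)\rceil\bigr)$, and the tight worst-case miss count is $K$ minus this quantity.

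Each of the three conditions then falls out by comparing $C'$ against the landmarks $K(C-1)$ and $K(C-1) + J(T-(C-1))$ for $J = 1,\dots,K$. For Condition 1, $C' < K(C-1) + 1$ makes the numerator non-positive so $s = 0$ is achievable, permitting up to $K$ misses in the window; conversely $C' < KC$ prevents satisfying all $K$ sub-intervals, which establishes the ``moreover'' clause that at least one deadline must be missed per window. For Condition 2, the lower endpoint forces $(C' - K(C-1))/(T-C+1) > J-1$, hence $s \geq J$, while the upper endpoint forces $s < J+1$, giving $s = J$ exactly and a miss bound of $a = K - J$ over $b = K$ intervals; achievability is witnessed by the explicit placement that packs $J$ chosen sub-intervals with at least $C$ units and pads the remaining $K - J$ with $C-1$ units each. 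For Condition 3 with $J = K$, the lower bound on $C'$ forces $s > K-1$, so $s = K$, no deadline is ever missed, and the unconditional subtyping $\tau' \lhd \tau$ holds.

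The principal obstacle is not depth but bookkeeping: one must verify that the integer boundaries of the $J$-indexed ranges for $C'$ align exactly with the integer threshold on $s$, particularly at junctions of the form $C' = K(C-1) + J(T-(C-1))$, and confirm that edge cases such as $C = 1$ (where the denominator $T - C + 1$ collapses to $T$) are handled by the same uniform formula. The combinatorial core is the simple capacity argument above; the real work is presenting the three-way case split cleanly and confirming, for each range of $C'$, that the adversary-feasible miss count matches the stated bound.
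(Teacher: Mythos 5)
Your proof is correct and follows essentially the same route as the paper's: both rest on the per-superinterval capacity bound that $s$ satisfied sub-intervals and $K-s$ missed ones can absorb at most $sT + (K-s)(C-1)$ units, and both exhibit the adversarial placement ($C-1$ units in each missed sub-interval) to show tightness. Your single threshold formula $s_{\min} = \max\bigl(0,\lceil (C'-K(C-1))/(T-C+1)\rceil\bigr)$ merely packages the paper's three case analyses into one expression, which is a cleaner presentation of the same argument rather than a different one.
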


\begin{proof} 
\noindent{\em Condition 1:} Since $T' = KT$, we have $K$
  intervals. No matter how the distribution of $C'$ is going to
  be over the $K$ intervals, it is always the case that the
  resource allocation will be less that the $KC$ units needed
  over the $K$ intervals. Thus we conclude that the schedule
  will always include at least one interval with a missed
  allocation.\\
\noindent{\em Condition 2:} Consider the left inequality in the conjunction,
  \textit{i.e.}, $K  \times  (C-1) + (J-1)  \times  (T - (C-1)) + 1 \leq
  C'$. Assume that there are $(K-J)$ unsatisfied intervals
  with at most $(K-J)  \times  (C-1)$ allocation units. Thus, there
  should be $J$ satisfied intervals containing at least $C' -
  (K-J)  \times  (C-1)$ allocation units. Therefore we have:
      $ C' >  J  \times  (T - (C-1))  + K  \times  (C-1)$.

  Since $ (K-J)  \times  T$ is the total time in all the satisfied
  intervals, it follows that the total time in the satisfied
  interval is strictly less than the allocations in the
  satisfied interval -- a contradiction. Therefore,
      $ C' <  J  \times  (T - (C-1))  + K  \times  (C-1) + 1.$

  Now, consider right inequality in the conjunction,
  \textit{i.e.}, $C' < K  \times  (C-1) + (J)  \times  (T - (C-1)) + 1$. If
  $C' < K  \times  (C-1) + (J-1)  \times  (T - (C-1)) + 1$, then there exists
  a schedule such that the number of satisfied interval is
  strictly less than $J$. Let $C' = K  \times  (C-1) + (J-1)  \times  (T -
  (C-1)) < K  \times  (C-1) + (J-1)  \times  (T - (C-1)) + 1$. We can simply
  distribute $C-1$ allocation units
      over $K$ intervals such that none of the intervals are
  satisfied. Furthermore, we distribute $T-(C-1)$ allocation
  units over $J-1$ windows, thus completely filling $J-1$
  intervals with $T$ allocation units.  Thus, we end up with at
  least $J-1$ satisfied intervals.\\
  \noindent{\em Condition 3:} To guarantee all intervals, in the
  worst case, we need to have $K-1$ intervals filled with $T$
  allocation units. In addition, we need to have at least $C$
  allocation units in the last interval. By substituting $K$ for $J$ in the above equation we get:
      $C \geq K \times  (C-1) + (K-1)  \times  (T - (C-1)) + 1$ therefore,
	$C \geq C + (K-1) \times T$.

\end{proof}

\begin{thm} \label{TransferCSame}
Let $\tau = (C, T)$ be an SLA type and $\tau' = (C, T')$
be a host-provided SLA type, where $(T+C)/2 < T' < T$ and $C
\leq T'$. If $m ={\rm lcm}(T,T')/T$, and $n = {\rm lcm}(T,T')/T'$ where lcm is
the least common multiple, then
\begin{enumerate}
  \item We can guarantee at least $s = n - m + 1$ satisfied
      intervals out of total $m$ intervals.
  \item We can guarantee at least $l =
      \lceil\dfrac{m}{(C+1)}\rceil$ satisfied intervals out
      of the total $m$ intervals.
\end{enumerate}
We can bound the number of missed deadlines every $m$ intervals to be $a = m - \max(s,l)$.
Therefore $\tau' \lhd_{a,b} \tau$ where $a = m - \max(s,l)$ and $b=m$.
\end{thm}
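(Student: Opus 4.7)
The plan is to prove bounds (1) and (2) as independent lower bounds on the number of satisfied $T$-intervals within one lcm period, then combine them to obtain the missed-deadline bound $a = m - \max(s,l)$ over $b = m$ intervals. The natural analysis window is $[0, \mathrm{lcm}(T,T'))$, which contains exactly $m$ aligned $T$-intervals and $n$ aligned $T'$-intervals.

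For item (1), my approach is combinatorial: I would count how many complete $T'$-intervals are forced to lie entirely inside some $T$-interval. Inside one lcm period there are $m-1$ internal $T$-boundaries, at positions $kT$ for $k=1,\ldots,m-1$. None coincides with a $T'$-boundary, since otherwise $kT$ would be a common multiple of $T$ and $T'$ strictly less than $\mathrm{lcm}(T,T')$, contradicting minimality. So each internal $T$-boundary falls strictly inside some $T'$-interval, and because consecutive $T$-boundaries are spaced $T > T'$ apart, no two can land inside the same $T'$-interval. This leaves $n - (m-1) = n - m + 1$ $T'$-intervals fully contained in some $T$-interval; each such contained $T'$-interval donates its $C$ units of allocation to its enclosing $T$-interval and so makes it satisfied. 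The hypothesis $(T+C)/2 < T'$ forces $T < 2T'$, preventing any $T$-interval from swallowing two complete $T'$-intervals, so the $n - m + 1$ satisfied $T$-intervals are distinct, yielding $s = n - m + 1$.

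For item (2), the key subclaim I aim to establish is that any window of $C+1$ consecutive $T$-intervals must contain at least one satisfied interval. The window has length $(C+1)T$, and $(C+1)T/T' > C+1$ since $T > T'$, so the usual floor count gives at least $\lfloor (C+1)T/T' \rfloor - 1 \geq C$ complete $T'$-intervals inside the window, delivering at least $C\cdot C = C^2$ units of allocation. If all $C+1$ $T$-intervals in the window were unsatisfied, their combined allocation would be at most $(C+1)(C-1) = C^2 - 1$, a contradiction. Treating the $m$ $T$-intervals of one lcm period cyclically (justified by taking the worst-case schedule to be lcm-periodic, a standard simplification in periodic real-time analysis), the unsatisfied intervals must form cyclic arcs of length at most $C$ between satisfied ones, so $m - s \leq s\cdot C$, giving $s \geq m/(C+1)$ and hence $s \geq \lceil m/(C+1) \rceil = l$ by integrality.

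Combining the two independent lower bounds, any valid schedule satisfies at least $\max(s,l)$ of the $m$ aligned $T$-intervals in one lcm period, so the number of missed deadlines is at most $a = m - \max(s,l)$ over $b = m$ intervals, which is precisely $\tau' \lhd_{a,b} \tau$. I expect the trickiest piece to be part (2): both the internal counting of $\geq C$ complete $T'$-intervals inside a $(C+1)T$ window, and the step from ``every $C+1$ consecutive contains a satisfied interval'' to the bound $\lceil m/(C+1) \rceil$, require care. The non-cyclic pigeonhole version yields only the weaker $(m-C)/(C+1)$, so it is essential to invoke the cyclic/periodic structure of the worst-case schedule; the arguments for part (1) and for the combination step are essentially bookkeeping once the inequality $T < 2T'$ implied by the hypothesis is exploited.
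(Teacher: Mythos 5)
Your proof is correct and reaches both bounds by largely the same strategy as the paper, but the mechanics differ in two places worth noting. For item (1), the paper counts the $T'$-intervals fully contained in $T$-intervals by a residue argument: it reduces to the coprime case via $\gcd(T,T')$, marks the offsets $K T' \bmod T$ on a circle, and counts those that do not ``pass the cycle,'' arriving at $n-m+1$. Your dual count --- the $m-1$ interior $T$-boundaries each break a distinct $T'$-interval (distinct because $T>T'$, non-coincident with $T'$-boundaries by minimality of the lcm), leaving $n-m+1$ unbroken ones --- is cleaner, and unlike the paper you explicitly check via $T<2T'$ (a consequence of $(T+C)/2<T'$) that these contained $T'$-intervals lie in distinct $T$-intervals; the paper simply asserts that the satisfied $T$-intervals are at least as numerous as the contained $T'$-intervals. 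For item (2) your key subclaim and its proof coincide with the paper's: a run of $C+1$ unsatisfied $T$-intervals would contain at least $C$ complete $T'$-intervals supplying $C^2$ units against a capacity of at most $(C+1)(C-1)=C^2-1$. The one soft spot is your aggregation step: appealing to ``the worst-case schedule is lcm-periodic'' to justify the cyclic gap count is an unproved claim, since the quantification is over all schedules satisfying $(C,T')$, not just periodic ones. The paper's route avoids this: because $T'<T$, the first aligned $T'$-interval of every lcm period is fully contained in the first $T$-interval of that period, so that interval is always satisfied; anchoring the pigeonhole there gives $1+\lfloor (m-1)/(C+1)\rfloor=\lceil m/(C+1)\rceil$ directly, and it also supplies the $s\geq 1$ you implicitly need before speaking of cyclic gaps at all. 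With that substitution your argument is complete.
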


\begin{proof} 
\noindent{\em Condition 1:} Since $T > T'$, we observe that the number of
      satisfied intervals of length $T$ is at least equal to the number of
      completely overlapping intervals of length $T'$.
      Let $f(T,T')$ be the number of completely overlapped
      unique intervals of $\tau$ in $\tau'$, then \[f(T,T') =
      f(\frac{T}{\gcd(T,T')}, \frac{T'}{\gcd(T,T')})\] where
      $\gcd$ is the greatest common divisor of $T$ and $T'$.
      Thus $\frac{T}{gcd(T,T')}$ and
      $\frac{T'}{gcd(T,T')}$ are prime with respect to each
      other.
      Let $R = \{K \times T' \mod T\;|\;1 \leq K \leq T\}$, 
      then $|R| = T$. Furthermore $R = \{1, \ldots ,
      T\}$.\footnote{For simplicity, we choose to enumerate from 1 to $T$ instead of
      from 0 to $T-1$.}
      Since the remainders in $R$ are unique, let us mark
      the remainders on a circle starting from $1$ and
      ending at $T$. We observe that
      every remainder that is marked at the region starting
      from 1 to $\frac{T}{\gcd(T,T')}-\frac{T'}{\gcd(T,T')}$
      will not pass the cycle ending at $T$ because $T > T'$. In addition,
      the interval that starts at position $T$ will also not
      pass the cycle, therefore the total number of
      overlapping intervals is:
      \begin{align*}
	  &\frac{T'}{\gcd(T,T')} - \frac{T}{\gcd(T,T')} + 1 = n - m + 1. 
      \end{align*}\\
\noindent{\em Condition 2:}  Since $T'<T$, the first interval will always be satisfied. To bound the number
      of missed allocations, we assume an adversary whose
      purpose is to maximize the number of missed allocations by
      allocating the resource to intervals
      that are already satisfied. Under such conditions,
      we prove that every $C+1$ intervals of length $T$ will contain at
      least one satisfied interval.

Consider any schedule, assume it has a sub-sequence
      $S$ of $C+1$ unsatisfied intervals of length $T$ denoted
      by $T_1, \ldots , T_{c+1}$ where $T_1 = [t, t+T],
      \ldots, T_{c+1} = [t+CT, t']$. 
      Exactly $C$ intervals of length $T'$ are completely contained in $S$ denoted as $T'_1,
      \ldots, T'_c$. Let $T'$ start at $t_1$ and $T_c$ end
      at $t_2$. The total computation time in
      $[t_1,t_2]$ is at most equal to the total computation time in
      $[t,t']$.
      The total computation time scheduled in
      $[t_1,t_2] = C  \times  C = C^2$. Since all the
      intervals $T_i$ in $S$ are unsatisfied, the total
      computation time scheduled for each interval of length
      $T$ can be at most $C - 1$. Therefore the total computation time in $[t,t'] \leq (C+1)  \times  (C
      - 1) = C^2 - 1$. Contradiction. $S$ must contain some satisfied intervals.
      \begin{figure} [htp] \label{Cplus1windows}
       \centering
      \includegraphics[width=0.6\textwidth]{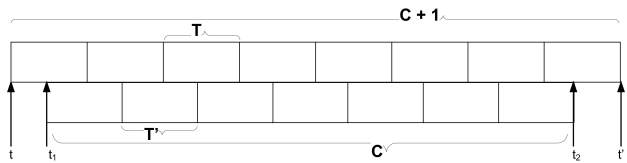}
      \caption{Adversary trying to miss maximum possible deadlines} \label{adversary}
      \end{figure}
To generalize, we have $m$ intervals of length $T$. Since every
$C+1$ intervals of length $T$ will contain at least one
satisfied interval, we can bound the number of missed
allocations to be at most equal to $m - \lceil\frac{m}{(C+1)}\rceil$.
\end{proof}

We also define a two step transformation of an SLA type by applying the transformation in Theorem \ref{MultiplyByK} the transformation in Theorem \ref{TransferCSame}.

\begin{thm} \label{Compose1}
Let $\tau_1 = (C_1, T_1)$,  and $\tau_2 = (C_2,T_2)$ such that $\tau_2 \lhd_{a,b} \tau_1$ by applying the transformation in Theorem \ref{MultiplyByK}.
Let $\tau_3 =(C_3,T_3)$ such that $\tau_3 \lhd_{x,y} \tau_2$ by applying the transformation in using Theorem \ref{TransferCSame}.
Then $\tau_2 \lhd_{c,d} \tau_1$ where $c=(b \times x+(y-x) \times a)$ and $d=(b \times y)$.
\end{thm}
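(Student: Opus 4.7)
The plan is to derive the looser bound $(c,d)$ directly from the hypothesis $\tau_2 \lhd_{a,b} \tau_1$ by enlarging the window of intervals considered. Since $\tau_2 \lhd_{a,b} \tau_1$ holds via Theorem~\ref{MultiplyByK}, any schedule that satisfies $\tau_2$ misses at most $a$ deadlines of $\tau_1$ in every window of $b$ consecutive intervals of length $T_1$. This is the only property of the hypotheses that is exercised in the body of the proof; the appeal to Theorem~\ref{TransferCSame} and the existence of $\tau_3$ serve to supply admissible values of $x$ and $y$ out of which the constants $c$ and $d$ are built.

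First, I would fix an arbitrary window $W$ of $d = b \times y$ consecutive intervals of length $T_1$ and partition it into $y$ disjoint sub-windows of $b$ consecutive intervals each. Applying the $(a,b)$-guarantee to each sub-window and summing yields at most $a \times y$ missed deadlines of $\tau_1$ over $W$, under any schedule satisfying $\tau_2$.

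Next, I would verify that the stated constant $c$ dominates this count by rewriting
$$c \;=\; b \times x + (y - x) \times a \;=\; a \times y + x \times (b - a).$$
By the Definition of $\lhd_{a,b}$, the value $a$ is a count of missed deadlines out of $b$ intervals so $0 \leq a \leq b$, and by the Definition of $\lhd_{x,y}$ we have $0 \leq x \leq y$; hence $x \times (b - a) \geq 0$ and consequently $c \geq a \times y$. Combining with the previous step, at most $c$ deadlines of $\tau_1$ are missed over any window of $d$ intervals of length $T_1$, which is exactly $\tau_2 \lhd_{c,d} \tau_1$.

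The only real obstacle is confirming the sign conditions $a \leq b$ and $x \leq y$ needed to certify $c \geq a \times y$; beyond that the proof is a straightforward rescaling of the guarantee inherited from Theorem~\ref{MultiplyByK}, with the constants $x$ and $y$ supplied by the companion transformation of Theorem~\ref{TransferCSame}.
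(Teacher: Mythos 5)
Your argument is internally sound as far as it goes: partitioning an aligned window of $d = b \times y$ intervals of length $T_1$ into $y$ aligned blocks of $b$ gives at most $a \times y$ misses, and the identity $c = b \times x + (y-x) \times a = a \times y + x \times (b-a)$ together with $a \leq b$ and $x \geq 0$ shows $c \geq a \times y$. But this only proves a weakening of the hypothesis $\tau_2 \lhd_{a,b} \tau_1$ that you already had, and you yourself observe that $\tau_3$ and Theorem \ref{TransferCSame} play no role beyond supplying the numerals $x$ and $y$. That unused hypothesis is the tell: the conclusion of Theorem \ref{Compose1} as printed contains a typo, and the intended claim -- the one the surrounding text announces as a ``two step transformation'' and the one the paper's own proof actually establishes -- is $\tau_3 \lhd_{c,d} \tau_1$, i.e., a bound on how badly a schedule that merely satisfies $\tau_3$ serves the original demand $\tau_1$.

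The paper's argument is a genuine composition of the two guarantees, which your rescaling cannot reproduce. Over a window of $y$ intervals of $\tau_2$, at most $x$ are missed; each missed $\tau_2$-interval is charged as a total loss of the corresponding block of $b$ intervals of $\tau_1$ (contributing $b \times x$), while each of the remaining at least $y - x$ satisfied $\tau_2$-intervals still permits up to $a$ misses among its $b$ corresponding $\tau_1$-intervals by the first guarantee (contributing $(y-x) \times a$), for a total of at most $c$ misses over $d = b \times y$ intervals of $\tau_1$. Nothing in your proof bounds the behavior of a $\tau_3$-schedule with respect to $\tau_1$, so the composed transformation -- the actual content of the theorem -- is not established.
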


\begin{proof}
$\tau_2 \lhd_{a,b} \tau_1$ will miss at most $a$ allocations over $b$ intervals.
$\tau_3 \lhd_{x,y} \tau_2$ will miss at most $x$ allocations over $y$ intervals.
Every missed allocation in $\tau_2$ corresponds to the failure of satisfying an entire window $b$ in $\tau_1$, and every satisfied window in $\tau_2$ corresponds to missing at most $a$ allocations in $\tau_2$. Thus at most, the total number of missed allocations over a window $d=(b \times y)$ is the sum of all possible missed allocations
$c=(b \times x +(y-x) \times a)$.
\end{proof}

\begin{thm}\label{same}
Applying the transformations in Theorems \ref{MultiplyByK} and \ref{TransferCSame}, in this order, is equivalent to applying the transformations in Theorems \ref{TransferCSame} and \ref{MultiplyByK}, in this order. That is, the two transformations commute.
\end{thm}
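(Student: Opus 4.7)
My plan is to interpret Theorem~\ref{same} as the statement that, for any pair of parameters $(K, T^*)$ chosen consistently, Path~A (MultiplyByK then TransferCSame) and Path~B (TransferCSame then MultiplyByK) starting from $\tau_0 = (C_0, T_0)$ produce the same final SLA tuple $(C_1, T^*)$ with the same aggregate miss bound as obtained via Theorem~\ref{Compose1}. Explicitly, Path~A traces $(C_0, T_0) \to (C_1, K T_0) \to (C_1, T^*)$, while Path~B traces $(C_0, T_0) \to (C_0, T^*/K) \to (C_1, T^*)$, where $C_1$ is determined in both cases by the same case of Theorem~\ref{MultiplyByK} applied with factor $K$.

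I would proceed by first fixing the parameters of Path~A and then showing constructively that Path~B is well-defined, that its output tuple agrees with Path~A's, and that its combined miss bound coincides. For well-definedness, Path~A's TransferCSame requires $(K T_0 + C_1)/2 < T^* < K T_0$ and $C_1 \leq T^*$; after multiplying Path~B's TransferCSame conditions by $K$, they read $K(T_0 + C_0)/2 < T^* < K T_0$ and $K C_0 \leq T^*$. These agree exactly in the no-miss case of Theorem~\ref{MultiplyByK} (Condition~3), where $C_1$ is chosen so that the $K C_0$ relation aligns, and in the miss-producing cases I would argue that the admissible ranges of $T^*$ still coincide because the MultiplyByK transformation is applied with the same factor $K$ in both paths, producing identical intermediate capacity slack.

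For the miss-bound equality, I would substitute into the formula of Theorem~\ref{Compose1}, $c = b \cdot x + (y - x) \cdot a$ and $d = b \cdot y$, in each order. Letting $(a_M, b_M)$ denote the bound from MultiplyByK (with $b_M = K$ in Conditions~1 and~2) and $(a_T, b_T)$ the bound from TransferCSame at the appropriate scaled period, Path~A yields $c_A = b_M a_T + (b_T - a_T) a_M$ and $d_A = b_M b_T$. Path~B produces the same symbolic form because its MultiplyByK step operates on a period of $T^*/K$ but with the same factor $K$, so each missed window in the shorter-period intermediate SLA corresponds to the same $K$-window failure that MultiplyByK accounts for in Path~A, and the TransferCSame bound scales linearly. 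A direct algebraic identity then forces $c_A = c_B$ and $d_A = d_B$.

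The hardest part will be verifying commutativity uniformly across Conditions~1, 2, and~3 of Theorem~\ref{MultiplyByK}, since each gives a distinct miss bound and a distinct admissible range for $C_1$. I expect Condition~3 (no misses from MultiplyByK) to collapse almost immediately by a length-scaling argument, while Conditions~1 and~2 will demand careful bookkeeping to show that a miss in the shorter-period SLA propagates through MultiplyByK identically whether MultiplyByK is applied before or after the TransferCSame period shrink. The combinatorial accounting of how misses from one transformation interact with the windows of the other is the real locus of technical work.
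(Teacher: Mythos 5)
Your plan is more ambitious than what the paper actually does: the paper's proof of Theorem~\ref{same} is essentially a restatement of the proof of Theorem~\ref{Compose1}, implicitly relying on the fact that the composition formula $c = b\,x + (y-x)\,a = bx + ya - xa$ is symmetric under swapping $(a,b)\leftrightarrow(x,y)$, and it never actually traces the reverse-order path. You, by contrast, propose to verify that both orders yield the same final tuple and the same numerical bound. That is the right thing to want, but your proposal defers exactly the step on which the claim stands or falls. The symbolic symmetry of $bx+(y-x)a$ only yields commutativity if the component bounds $(a_M,b_M)$ and $(a_T,b_T)$ are themselves identical in the two orders, and in general they are not, because each is evaluated at a \emph{different} intermediate SLA. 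Concretely: in Path~A, Theorem~\ref{TransferCSame} is applied to the demand $(C_1, KT_0)$, so its bound $l=\lceil m/(C+1)\rceil$ uses $C_1$; in Path~B it is applied to $(C_0,T_0)$, so it uses $C_0\neq C_1$. Likewise, the $J$-bucket boundaries of Theorem~\ref{MultiplyByK}, $K(C-1)+(J-1)(T-(C-1))+1 \leq C'$, involve the demand period $T$ explicitly, which is $T_0$ in Path~A but $T^*/K < T_0$ in Path~B, so the same $C_1$ can land in different buckets and produce different miss counts $a=K-J$. Your justification for why the ranges and bounds nonetheless coincide (``identical intermediate capacity slack,'' ``the TransferCSame bound scales linearly'') is an assertion, not an argument, and you yourself label it the ``real locus of technical work'' without carrying it out.

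There is also a well-definedness gap you gesture at but do not close: Path~B requires the intermediate period $T^*/K$ to be a natural number and to satisfy $(T_0+C_0)/2 < T^*/K < T_0$ and $C_0 \leq T^*/K$, which is a strictly different feasibility region from Path~A's $(KT_0+C_1)/2 < T^* < KT_0$, $C_1\leq T^*$ once $C_1 \neq KC_0$ (i.e., in every miss-producing case of Theorem~\ref{MultiplyByK}). Until you either (i) restrict to a parameter regime where both paths are simultaneously well-defined and prove the bucket/bound coincidence there, or (ii) weaken the claim to match what the paper actually establishes (both orders admit a composed bound of the same symbolic form), the proof is incomplete. As written, the proposal is a credible road map whose destination has not been reached -- and, given the capacity/period asymmetries above, may not be reachable in the strong form you state.
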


\begin{proof}
Theorem \ref{Compose1} highlights results of applying Theorem \ref{MultiplyByK} followed by Theorem \ref{TransferCSame}. We would like to show that the results for Theorem \ref{TransferCSame} followed by Theorem \ref{MultiplyByK} are equal.

$\tau_2 \lhd_{a,b} \tau_1$ will miss at most $a$ allocations over of $b$ intervals.
$\tau_3 \lhd_{x,y} \tau_2$ will miss at most $x$ allocations over $y$ intervals.
Every missed allocation in $\tau_2$ corresponds to the failure of satisfying an entire window $b$ in $\tau_1$, and every satisfied window in $\tau_2$ corresponds to missing at most $a$
allocations in $\tau_2$. Thus at most, the total number of allocations over a window $d=(b \times y)$ is the sum of all possible missed allocations $c=(b \times x +(y-x) \times a)$.
\end{proof}

\section{SLA Model: Fluidity} \label{sec:SLAModel}
As we alluded before, we believe that a periodic resource allocation
  model is appropriate for expressing SLAs in an IaaS setting. Thus,
  in this section we extend the periodic SLA model for the purpose of expressing general SLAs of IaaS customer workloads -- which may not be inherently ``real time''. In particular, we extend the SLA model to allow for the modeling of ``fluid'' workloads.

A {\em fluid workload} is one that requires {\em predictable} periodic
  allocation of resources ({\em i.e.} not best effort), but has
  flexibility in terms of how such periodic allocations are
  disbursed ({\em i.e.} not real-time). For instance, a fluid workload may specify a periodic
  need for resources as long as the disbursement of these cycles is
  guaranteed over some acceptable range of periods. For example, a
  fluid workload may specify the need for 10K cycles per second as
  long as these cycles are disbursed over a fixed period in the range
  between 100msec and 10 secs. Thus, a disbursement of 1K cycles every
  100 msecs is acceptable as is a disbursement of 100K cycles every 10
  secs. But, a disbursement of 200K cycles every 20 secs would be
  unacceptable as it violates the upper bound imposed on the
  allocation period, and so would an allocation of 100 cycles every 10
  msecs as it violates the lower bound. 

It is important to highlight that our periodic allocation is less stringent than what a
  ``real-time'' workload may require, but more stringent than what
  typical IaaS virtualization technologies are able to offer. Unlike real-time systems, there is no notion of deadlines, but rather an expectation of resource allocations at a prescribed predictable rate.

\begin{define}
An SLA $\tau$ is defined as a tuple of natural numbers
  $(C,T,T_l,$ $T_u,D,W)$, such that $0 < C \leq T$, $T_l \leq T \leq
  T_u$, $D \leq W$, and $W \geq 1$, where $C$ denotes the resource
  capacity supplied or demanded during each allocation interval $T$,
  $T_l$ and $T_u$ are lower and upper bounds on $T$, and $D$ is the
  maximum number of times that the workload could tolerate missing an
  allocation in a window consisting of $W$ allocation
  intervals.
\end{define}

According to the above definition, an SLA of type $(C,T,T_l,T_u,
  D,W)$ represents a fluid workload which requires an allocation of
  $C$ every interval $T$, where $T$ can vary between $T_l$ and $T_u$
  as long as the ratio $C/T$ is consistent with the original SLA type.
  The following are illustrative examples

An SLA of type $(2,4,2, 8, 0,1)$ represents a fluid workload that
  demands a capacity $C=2$ every allocation interval $T = 4$, however
  the original SLA would still be satisfied if its gets a capacity
  $C'=4$ every allocation interval $T'=8$ since the ratio $C'/T'$ is
  equal to $C/T$.

An SLA of type $(2,4,2, 8, 1,5)$ is similar in its demand
  profile except that it is able to tolerate missed allocations as long as there are no more than $D=1$
  such misses in any window of $W =5$ consecutive allocation periods.

\noindent {\bf Fluid Transformations:} In addition to the
  transformations defined in Section \ref{sec:basics}, we introduce the following
  transformation for fluid workloads.

  \begin{thm}
  A fluid SLA of type $\tau' = (C', T', T_l, T_u, D, W)$ satisfies an SLA of type $\tau = (C, T, T_l, T_u,$ $D, W)$ if $T_l \leq
    T' \leq T_u$ and $C' = \lceil C  \times  T'/T\rceil$.
  \end{thm}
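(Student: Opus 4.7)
The plan is to argue directly from the semantics of a fluid SLA: the defining property is that any period $T'$ in the admissible window $[T_l,T_u]$ is acceptable provided the per-interval allocation matches the required rate $C/T$. Thus the proof amounts to verifying three things: (i) the new period $T'$ lies in the allowed range, (ii) the new per-interval capacity $C'$ sustains at least the original allocation rate, and (iii) the miss-tolerance parameters $(D,W)$ are preserved. Items (i) and (iii) are immediate from the hypotheses: $T_l \le T' \le T_u$ is given, and both SLAs share identical $D$ and $W$.

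The content of the proof lies in step (ii). First I would observe that $\lceil C\cdot T'/T\rceil \ge C\cdot T'/T$, so the long-run rate supplied by $\tau'$ satisfies $C'/T' \ge C/T$. Then I would argue that any schedule $\alpha \vDash \tau'$ delivers at least $C'$ units in every interval of length $T'$, and by aggregating over a common time window of length $\mathrm{lcm}(T,T')$ one gets total allocation $\mathrm{lcm}(T,T')/T' \cdot C' \ge \mathrm{lcm}(T,T')/T \cdot C$, which is exactly the amount required by $\tau$ over the same span. Because the fluid semantics only demands that the prescribed rate be honored over its admissible periods (there is no hard real-time deadline tied to a specific $T$), this rate-preservation suffices to conclude $\alpha \vDash \tau$.

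Step (iii) is handled by noting that a missed allocation in $\tau'$ corresponds to at most one missed allocation chunk of size $C$ in $\tau$ (since the rate match is exact up to the ceiling), so up to $D$ missed $T'$-intervals in any $W$-window translate into at most $D$ missed $T$-intervals in the corresponding $W$-window. Assembling these observations gives $\textlbrackdbl \tau' \textrbrackdbl \subseteq \textlbrackdbl \tau \textrbrackdbl$, i.e.\ $\tau' \lhd \tau$.

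The main obstacle I anticipate is being careful about the ceiling: whenever $T'/T$ is not rational with small denominator, $C'/T'$ may strictly exceed $C/T$, and one must check that this rounding-up never triggers an infeasibility (which it cannot, since $C' \le T'$ follows from $C \le T$ together with $T_l \le T' \le T_u$, assuming the admissible bounds respect $T_l \ge C \cdot T_l/T$, i.e.\ $C \le T$). The argument is otherwise essentially definitional, which is why the statement is presented without the elaborate case analysis required in Theorems \ref{thm2} and \ref{thm15}.
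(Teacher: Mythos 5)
Your proposal is correct and follows essentially the same route as the paper's own (very short) proof: the ceiling gives $C'/T' \geq C/T$, the hypothesis places $T'$ in the admissible range $[T_l,T_u]$, and the fluid semantics then yields satisfaction directly, with $(D,W)$ unchanged. Your additional elaborations (the lcm aggregation and the check that $C' \leq T'$) are harmless but not needed beyond what the definitional argument already provides.
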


\begin{proof}
$C' = \lceil C  \times  T'/T\rceil$ implies the ratio of $C'/T' \geq C/T$. $T_l \leq T' \leq T_u$ implies that $T'$ is an acceptable allocation period for fluid workload (based on the definition). Therefore $\tau' = (C', T', T_l, T_u, D, W)$ satisfies $\tau$
\end{proof}

\section{\MORPHOSYS: The Framework}\label{sec:DynamicService}

\noindent
  In this section, we exploit the theorems defined in Sections \ref{sec:basics} and \ref{sec:SLAModel} as building blocks in our \morphosys\ framework.
We consider an IaaS setting consisting of any number of homogeneous
  instances (servers), to which we refer as ``Physical Machines''
  (PM).\footnote{Again, we emphasize that while we present our framework in the context of computational supply and demand -- using   terminologies such as physical and virtual machines -- \morphosys\   is equally applicable to other types of resources.}
  Each workload
  (served with a virtual machine instance) is characterized by an SLA
  that follows the definition above -- namely $\tau = (C, T, T_l, T_u,
  D, W)$. The \morphosys\ colocation framework consists of two major
  services: a Workload Assignment Service (WAS) and a Workload
  Repacking Service (WRS). WAS assigns workloads to PMs in an on-line
  fashion using a prescribed assignment policy. WRS performs workload
  redistribution across PMs to optimize the use of cloud resources.

\subsection{Workload Assignment Service (WAS)}\noindent
Figure \ref{fig:WASFlowchart} provides an overview of the main
  elements of WAS. WAS is invoked upon the arrival of a request for a
  Virtual Machine (VM) allocation, in support of a workload specified
  by an SLA. The WAS service uses one of two heuristics to select the
  PM that could potentially host the VM: First Fit (FF) and Best Fit
  (BF). FF assigns the VM to the first PM that can satisfy the VM's
  SLA, whereas BF assigns the VM to the fullest -- most utilized -- PM
  that can still satisfy the VM SLA.

If it is not possible for WAS to identify (using FF or BF) a PM
  (currently in use) that could host the newly-arriving VM, then WAS
  attempts to rewrite the SLA of the VM (safely) in the hopes that it
  would be possible to assign the VM (subject to the transformed SLA)
  to an existing PM. To do so, WAS proceeds by generating a safe SLA
  transformation and attempts to use either FF or BF to find an
  assignment. This process is repeated until either one of the safe
  SLA transformations results in a successful assignment of the VM to
  a PM, or WAS runs out of possible safe SLA transformations. In the
  latter case, WAS may invoke the WRS repacking service to repack
  already utilized hosts in an attempt to assign the workload, or
  alternatively WAS can simply instantiate a new PM to host the
  newly-arriving PM.

In the worst case, the complexity of WAS is $O(k  \times  n)$ where $k$ is
  the largest number of possible task transformations per task across all possible tasks, and $n$ is the
  number of hosts in the system. Although its possible for $k$ to be the dominant factor in the complexity, based on experimental observations (inferred by traces from real-workloads), $k << n$, which implies that in practice, WAS scales linearly with the number of hosts.

\subsection{Workload Repacking Service (WRS)}\noindent
Repacking is an essential service that allows the
  remapping/reclustering of workloads. This service is needed because
  IaaS environments may be highly dynamic due to the churn caused by
  arrival and departure of VMs, and/or the need of customers to change
  their own resource reservations. Over time, such churn will result
  in under-utilized hosts which could be managed more efficiently if
  workloads are repacked.

\begin{figure} [htp]
\begin{center}
  \includegraphics[width=0.47\textwidth]{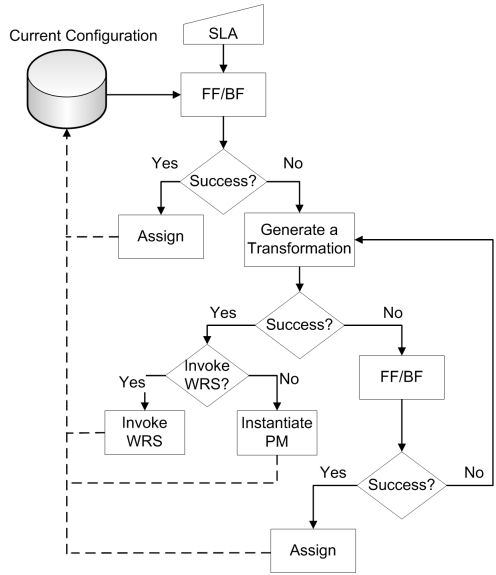}
  \caption{\label{fig:WASFlowchart} The WAS Component of \morphosys.}
\end{center}
\vspace{-0.3in}
\end{figure}

\noindent {\bf Repacking Heuristic:} Remapping a set of workloads to
  multiple hosts efficiently is the crux of the problem. We say
  ``efficient'' as opposed to optimal because multi-processor
  real-time scheduling has been shown to be NP-Hard
  \cite{garey1979computers} (and our problem by reduction is also
  NP-Hard), and thus we resort to heuristics. Many such heuristics (approximations) have
  been proposed in the literature ({\em e.g.}, based on the use
  of a bin packing or greedy strategy) \cite{Baruah90algorithmsand,Londono2008NETEMBED}.

Our safe SLA transformations provide us with another degree of freedom ({\em i.e.}, another dimension in the search space): Rather than finding the best packing of a set of tasks with fixed SLA requirements (the original NP-hard problem), we have the flexibility of safely manipulating the SLAs with the hope of achieving a better packing. Towards that end, we have implemented heuristic algorithms that utilize Breadth First Search (BFS) and Depth First Search (DFS) techniques to explore the solution search space.

\begin{algorithm}
 \algsetup{linenosize=\scriptsize}
  \scriptsize
   \caption{ Repack(HostList): WRS Repack Heuristic}
\label{alg:repackHeuristic}
\begin{algorithmic}[1]
	\STATE {BestSolSoFar $\leftarrow$ size(HostList)}
	\FORALL{Host in HostList} 
		\FORALL{Tasks in Host} 
			\STATE{TransTaskList $\leftarrow$ Gen\_Trans(Task)}
			\COMMENT{Generate a list of transformations for the particular task}
			\STATE CandTaskList[TaskID] $\leftarrow$TransTaskList
		\ENDFOR
	 \ENDFOR
	\STATE SortTaskList(CandTaskList) 	\COMMENT{Sort Tasklist ascending}
	\STATE Tree $\leftarrow$ EmptyNode
	\REPEAT
		\STATE {Task $\leftarrow$ SortTaskList.Remove(0)}
		\STATE AddNodeToTree(Task) \COMMENT{Add original task and transformations as nodes in the tree.}
        \STATE UtilizationOk(Tree) \COMMENT{Prunes tree branches}
    \UNTIL{SortTaskList $= Empty$}
	\STATE BestSol = GetBestSolution(Tree)
	\STATE return BestSol
\end{algorithmic}
\end{algorithm}

Our (BFS or DFS) heuristic starts with a preprocessing stage, in which we generate all possible transformations for each task (using our arsenal of safe transformations). Next, it proceeds by setting up the search space (tree or forest) of all the alternative task sets that could be colocated. Finally, it proceeds to explore that search space with the aim of finding a feasible transformation.

In the worst case, our heuristic may end up searching the entire solution space, which is obviously impractical. To manage the exponential
  nature of the search space, our heuristic utilizes two optimization (pruning) strategies.

Our first optimization strategy adopts an early-pruning approach: at each stage of our search, if the aggregate utilization (demanded SLA) of the tasks under consideration thus far (whether these tasks are transformed or not) is greater than the capacity of the host (supplied SLA), then we prune that branch of the tree on the assumption that a feasible solution cannot exist down that path. We initially set the ``best'' solution to be the total number of hosts used prior to repacking.

Our second optimization adopts a smaller-degree-first approach: we build the search space (tree) by greedily starting with tasks that have the smallest number of transformations. This ensures that when pruning is applied (using the above strategy) we are likely to maximize the size of the pruned subspace. This optimization strategy has been shown to be quite effective in reducing the solution search space for network embedding problems \cite{Londono2008NETEMBED}. For practical purposes, we set an upper-bound on the execution time of the repacking heuristic, which we take to be 5 minutes for services operating on hourly ``pay-as-you-go'' reservations.

Procedure \ref{alg:repackHeuristic} illustrates our repacking process. Given a set of candidate hosts, we proceed to generate transformations for each task on a particular host (lines 2-7). We sort the candidates based on the number of transformations, which is a precursor for applying the first optimization strategy. Lines 9-14 construct the tree by adding each task and its transformations at the leafs of the tree, Then prunes the branches of the tree where solutions exceed the current best solution. Finally, the best solution is returned (lines 15-16).

\noindent {\bf Repacking Policies:} Our WRS service could be
  instantiated based on one of three possible repacking policies: No
  Repacking (NR), Periodic Repacking (PR), and Forced Repacking
  (FR). NR is used to disable WRS, PR allows repacking to run at
  designated epochs/periods based on a system defined parameter. FR
  allows repacking to be applied in a ``on line'' fashion (triggered
  by the WAS).

  \begin{algorithm}
   \algsetup{linenosize=\scriptsize}
    \scriptsize
  \caption{SelectHosts(): WRS Host Selection}
  \label{alg:repack}
  \begin{algorithmic}[1]
   \IF {NM}
  	\FORALL{i in AllHosts}
  		\STATE{Repack($i$)} \COMMENT{repack a single host}
  	\ENDFOR
    \ELSE
  	\IF {CM}
  		\STATE{Hostlist $\leftarrow$ GetCandidateHosts()}
  		\COMMENT{Constrained migration: find the set of suitable candidate hosts}
  	\ELSE
  		\STATE{Hostlist $\leftarrow$ AllHosts}
  		\COMMENT{Unconstrained migration select all hosts}
  	\ENDIF
  	\STATE{Repack(Hostlist)}
   \ENDIF
  \end{algorithmic}
  \end{algorithm}

\noindent {\bf Migration Policies:} The effectiveness of the repacking
  policy depends on the ability to migrate workloads from one host to
  another. However, adding hosts increases the total number of
  workloads to be repacked which in turn results in an increase in the
  total service turnaround time. Thus we model three types of
  migration policies: No Migration (NM), Constrained Migration (CM),
  and Unconstrained Migration (UM). NM policy allows for repacking on
  the condition that workloads will not migrate from the host to which
  they are assigned. This approach will naturally consider {\it one}
  host at a time, and is suitable for running WAS in an ``online''
  fashion. In particular, under a NM policy, as the framework receives a new request for task assignment, it attempts to apply transformations on tasks assigned to that particular host as well as the candidate request. The goal of applying the transformations is to assign all tasks to the host. CM and UM policies allow for workloads to migrate from one
  host to another as long as it results in a more efficient repacking
  of these workloads. This is suitable when WAS is run in an
  ``offline'' fashion. The difference between CM and UM is in the host
  selection criteria: UM considers all system hosts,
  whereas CM considers hosts that satisfy a host selection
  condition.

\noindent {\bf Host Selection Condition:} A host is a candidate for
  repacking if it satisfies a condition on its utilization. Let
  $0<\phi <1$ be the average host utilization, which we define as
$\phi = \frac{\sum_{i = 1}^{n} u_i}{n}$, where $u_i >0$ is the
  utilization of host $i$. Furthermore, let $\omega =\sum_{i = 1}^{k}
  C_i/T_i$ be the sum of the utilizations of the workloads on a
  specific host (based on the original workload SLA and not the transformed workload SLA). A host is a candidate for repacking if $\phi - \omega
  \geq \epsilon$, where $0 < \epsilon \leq 1$ is a tunable parameter
  (CM reduces to UM, when $\epsilon = 0$). The logic for all WRS variants is shown in Procedure \ref{alg:repack}.

\section{\MORPHOSYS: Evaluation}\label{sec:experimental}\noindent
In this section we present results from extensive experimental
  evaluations of the \morphosys\ framework.  Our main purpose in doing
  so is to establish the feasibility of our proposed service by: (1)
  comparing the schedulability of QoS workloads, with and without
  applying our safe SLA transformations, (2) evaluating the effect
  from using different migration policies on the efficiency of
  colocation, (3) evaluating the effect of changes in the mix of fluid
  and non-fluid workloads, and (4) evaluating the effect of changing
  the flexibility of fluid workloads on the efficiency of
  colocation.

 \noindent{\bf Simulation Setup:} Our setting is that of a cloud storage service used to host a large number of streaming servers. This setting is general enough to represent different forms of applications, such as a cloud content provider streaming and other multimedia services. Typically for such applications, the disk I/O constitutes the bottleneck of the overall system performance \cite{cherkasova2003building, barker2010empirical}. The maximum throughput delivered by a disk depends on the number of concurrent streams it can support without any SLA violation.

To drive our simulations, we utilize a collection of video traces from \cite{Auwera2008Traffic}. We assume that the underlying system of the provider is a disk I/O system that serves requests of different streaming servers using a fixed priority scheduling algorithm, which we take to be Rate Monotonic. The usage of fixed priority algorithms for disk scheduling was suggested by Daigle and Strosnider \cite{daigle1994disk}, and Molano {\em et al} \cite{molano2002real}.

The video traces \cite{Auwera2008Traffic} provide information about the frames of a large collection of video streams under a wide range of encoder configurations like H.264 and MPEG-4. We conducted our experiments with a subset of 30 streams, with HD quality, and a total duration of one hour each. We initially identify the period for serving a video stream request as the period of the {\texttt I} frames ({\em a.k.a.}, Group of Pictures, or GoPs). Overall, there were three unique periods in our collection of video traces.

We model the SLA associated with each stream as follows: The SLA
  specifies a periodic (disk I/O) demand $C$ over a periodic
  allocation time $T$. For a given stream, the periodic demand $C$ is set as follows:

$$\mbox{\fontsize{9}{9}\selectfont $\displaystyle C = \frac{max(\sum^{n-1}_{i=0} b_i)}{\theta  \times  T}$}$$

\noindent where $b_i$ is
  the volume in bytes of the stream in interval $[i  \times  \theta  \times  T,
  (i+1)  \times  \theta  \times  T]$.The allocation period $T$ is set to be equal to $\theta  \times  T'$, where
  $T'$ is one of the three unique periods in our video traces and
  $\theta$ ($\theta \geq 1$) models the tolerance of the client (the
  recipient of the stream) to burstiness in the allocation over
  time. In particular, for any given value of $\theta$, it is assumed that the client is able to buffer (and hence absorb) up to  $T = \theta  \times  T'$ seconds of the stream ({\em i.e.}, $\theta$ GoPs). 
A large value for $\theta$ implies that the allocation is
  over a large number of GoPs, and hence a tolerance by the client for
  a bursty disbursement of periodic allocation. A small value for
  $\theta$ specifies a smoother disbursement over time.  Each client
  request specifies a value for the parameter $\theta$ which is chosen
  at random between a lower bound $\beta$ and an upper bound
  $\gamma$. In our experiments we set $\beta = 1$ and $\gamma = 10$.
To model the level of fluidity (flexibility) of an SLA, we allow the
  period $T$ to range from $(\theta - \sigma)  \times  T'$ to $(\theta +
  \sigma)  \times  T'$, where $\sigma$ ($\theta \geq \sigma \geq 0$)
  determines the allowable deviation from the nominal allocation
  period. A non-fluid SLA is one where $\sigma = 0$.

We model churn in the system as
  follows. Client arrivals (requests for streams) are Poisson
  (independent) with a rate $\lambda$. Poisson arrival processes for
  VoD have been observed in a number of earlier studies ({\em e.g.},
  \cite{VelosoAlmeidaMeiraBestavrosJin:ton06}).  A client's session
  time is set to be the length of the entire stream served to the
  client.  The specific stream requested by the client is chosen
  uniformly at random from among all streams in the
  system. Experiments with skewed distributions ({\em e.g.}, Zipf) and inhomogeneous poisson process
  resulted in results that are similar to those obtained using exponential
  preference,\footnote{This is expected given the relatively similar
  lengths of the streams in the trace.} and thus are not reported.

In our experiments, our purpose is to evaluate the efficiency of computing a colocation configuration for our workloads, as opposed to the performance of system deployment. Thus,  to measure the efficiency of a colocation strategy
  $X$, we report the {\em Colocation Efficiency} (CE), which is
  defined as follows:

$$\mbox{\fontsize{9}{9}\selectfont $\displaystyle CE = 1 - \frac{W(X)}{W(FF)}$}$$

\noindent where $W(X)$ is the amount of wasted
  (unallocated) resources when colocation strategy $X$ is used, and
  $W(FF)$ is the measure of wasted resources when our baseline
  First-Fit (FF) strategy is used. Thus, CE can be seen as the degree
  to which a strategy is superior to FF (the reduction in wasteds
  resources relative to FF, which according to theoretical bounds \cite{Liu1973Scheduling}, can be up to 30\%). All CE values are
  reported with 95\% confidence.

 \begin{figure} [htp]
\begin{center}
  \includegraphics[width=0.60\textwidth]{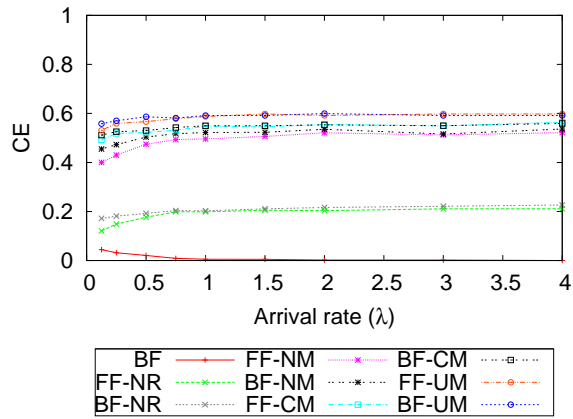} \\
	  \caption{Colocation Efficiency: Baseline Results}\label{fig:packingratio}
\end{center}
\vspace{-0.2in}
\end{figure}

\noindent{\bf Relative Performance of Various Strategies:} Recall that
  the assignment of an incoming workload (request) is done using WAC,
  which attempts various SLA transformations on an incoming request
  until the potentially transformed request is possible to assign to a
  host (disk) using either First-Fit or Best-Fit.

In a first set of experiments, we compared the performance of WAC with
  No Repacking under both FF and BF (namely FF-NR and BF-NR) to that
  of the plain FF and BF heuristics ({\em i.e.}, without attempting
  any SLA transformations). Figure \ref{fig:packingratio} shows the
  results we obtained when varying the arrival rate ($\lambda$) for
  the different packing strategies: BF, FF-NR, and BF-NR.

In general, the performance of BF is only marginally better than FF,
  whereas both FF-NR and BF-NR show measurable (up to 20\%)
  improvement over both FF and BF, with BF-NR performing slightly
  better than FF-NR. These results suggest that there is a measurable
  improvement in colocation efficiency even when minimal SLA
  transformations are allowed (namely the transformation of the SLA of
  the incoming request only).

  \begin{figure}[htp]
  \begin{center}
  \begin{minipage}{0.42\linewidth}
  	\centerline{\small ~~~~~~(a)}
  	\includegraphics[width=1\textwidth]{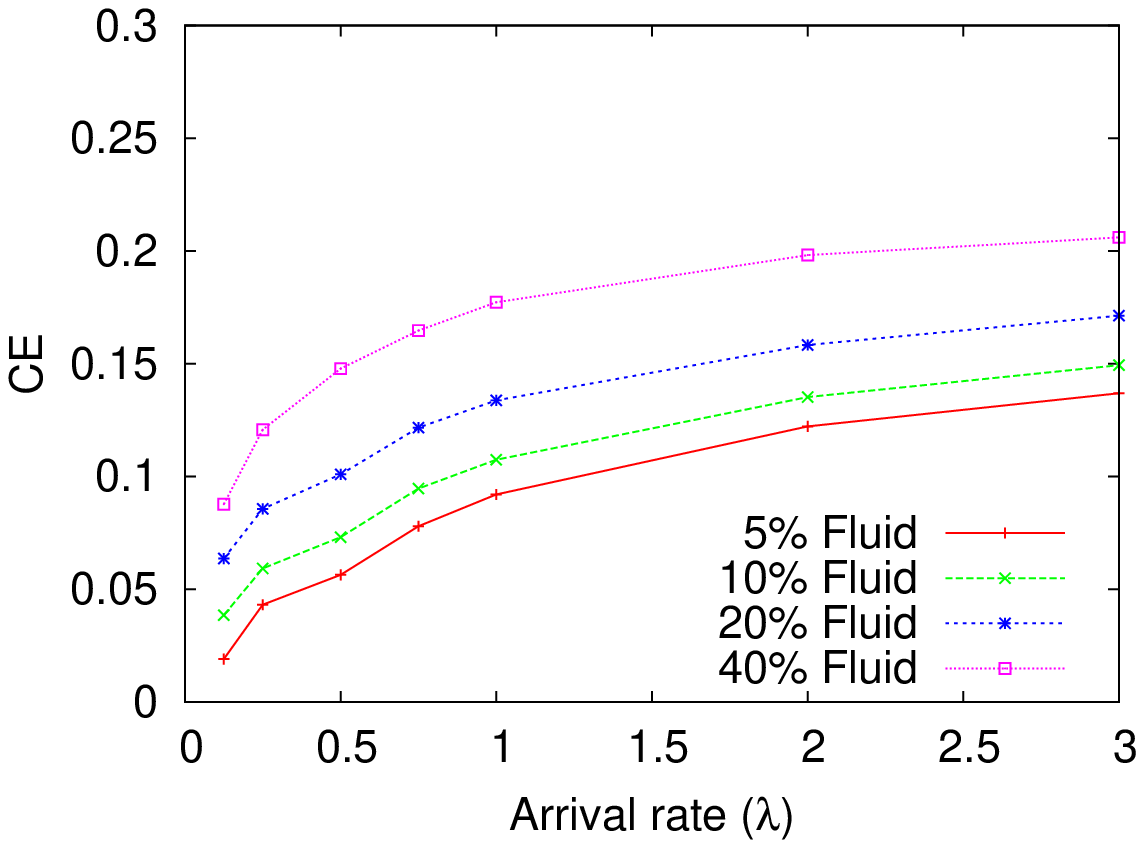}
  \end{minipage}
    \begin{minipage}{0.08\linewidth}
     $\quad$
    \end{minipage}
  \begin{minipage}{0.42\linewidth}
  \centerline{\small ~~~~~~(b)}\includegraphics[width=1\textwidth]{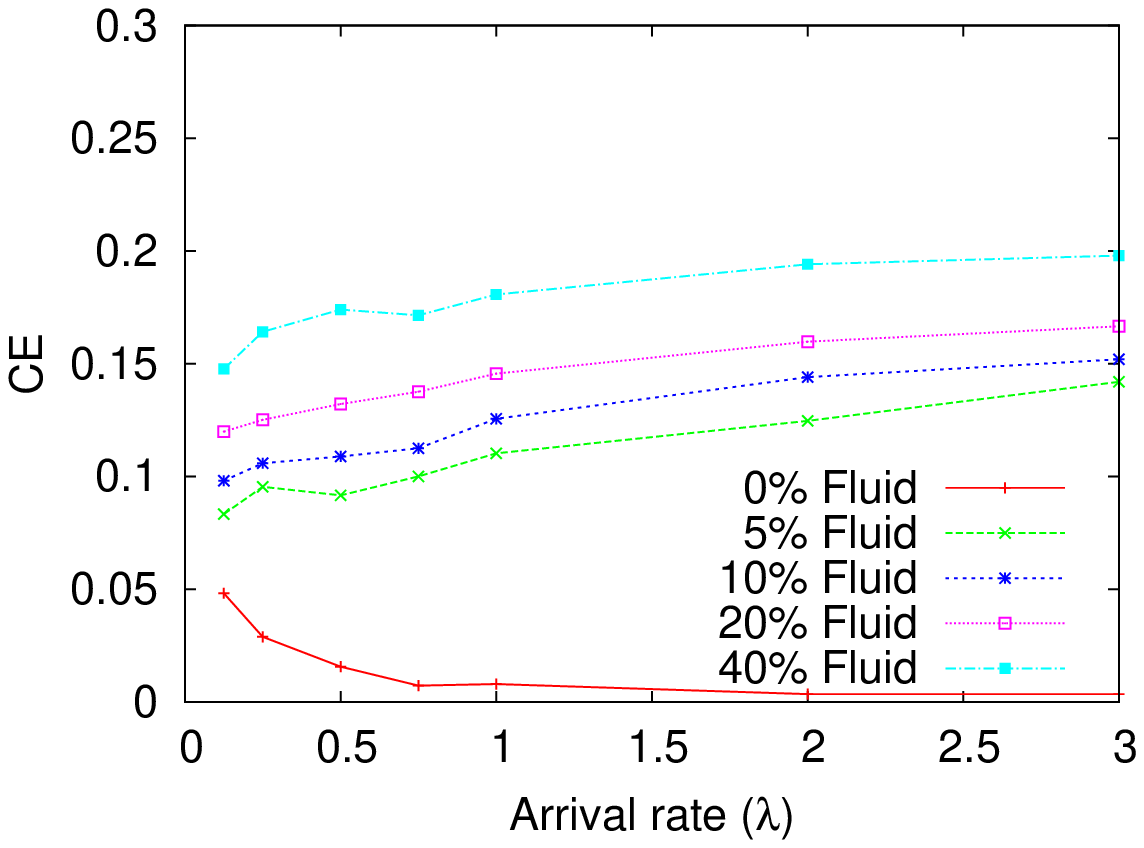}
  \end{minipage}
  \caption{Effect of fluid SLAs when only fluid transformations are allowed, $\sigma = 1$, (a) FF-NR. (b) BF-NR.}
  \label{fig:FluidTransSavings}
  \end{center}
  \vspace{-0.3in}
  \begin{center}
    \begin{minipage}{0.42\linewidth}
     \centerline{\small ~~~~~~(a)}
  \includegraphics[width=1\textwidth]{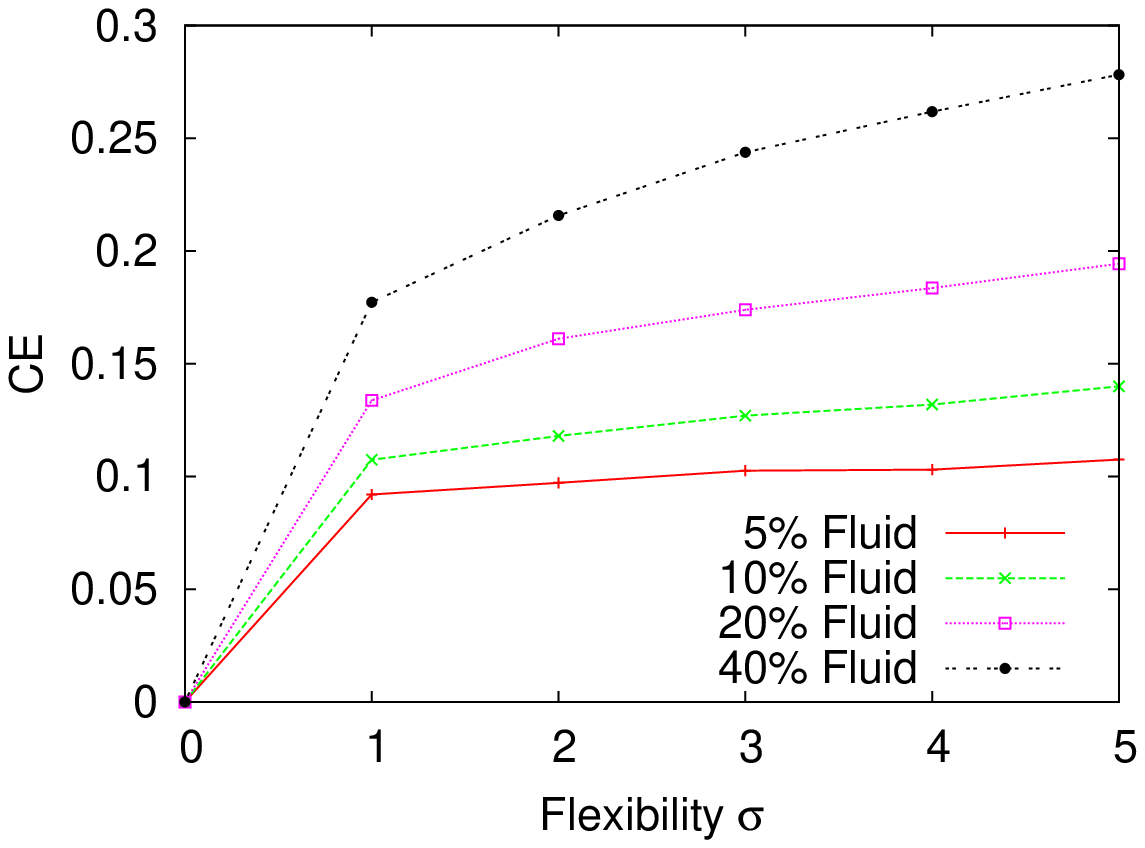}
    \end{minipage}
        \begin{minipage}{0.08\linewidth}
         $\quad$
        \end{minipage}
    \begin{minipage}{0.42\linewidth}
    \centerline{\small ~~~~~~(b)}\includegraphics[width=1\textwidth]{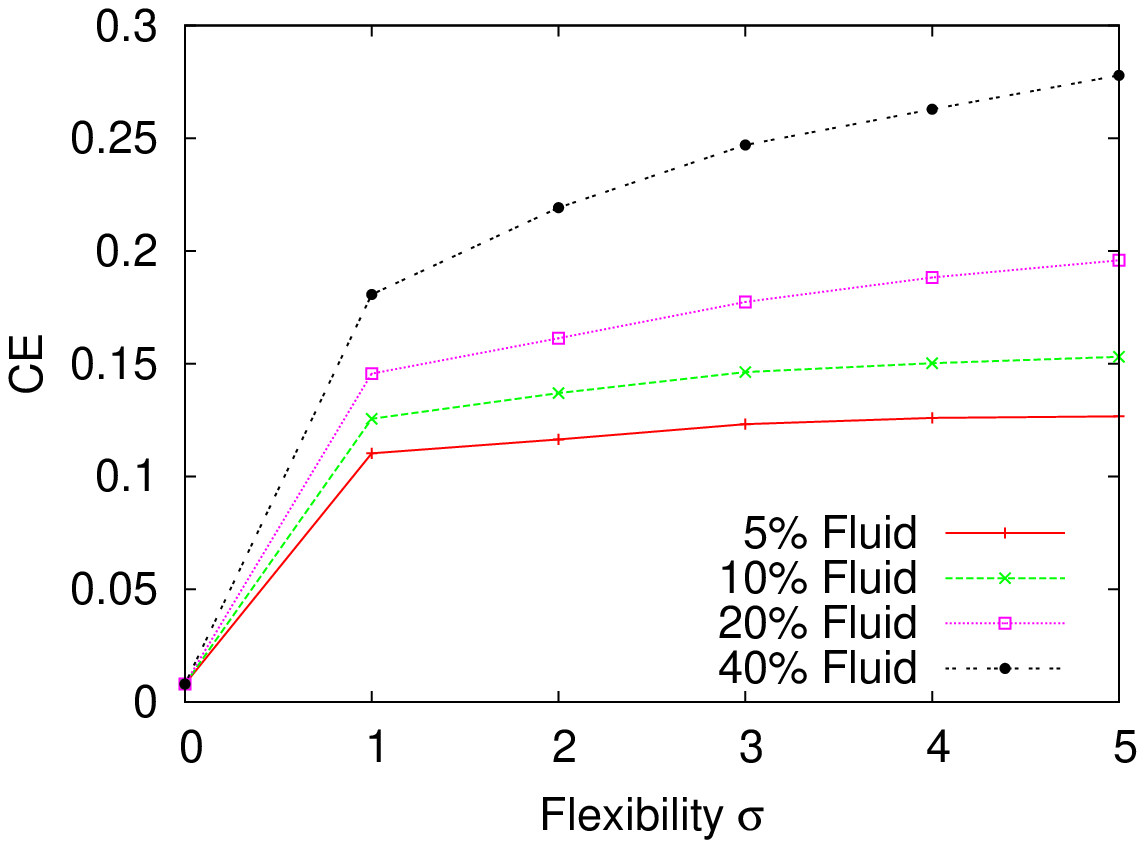}
    \end{minipage}
    \caption{ \label{fig:FluidTransStretchSavings} Effect of fluidity level when only fluid
    transformations are allowed, $\lambda = 1$, (a) FF-NR. (b) BF-NR.}
\end{center}
\vspace{-0.2in}
\end{figure}

To evaluate the benefit from repacking and migration, we ran a similar
  experiment with the repacking policy set to Forced Repacking
  (FR). Figure \ref{fig:packingratio} shows the measured CE values for
  different arrival rates ($\lambda$) and different repacking
  strategies. Our ``online'' repacking strategies with no migration --
  namely FF-NM and BF-NM -- improved colocation efficiency
  significantly. For lower arrival rates, CE was around 0.4, implying
  a 40\% reduction in wasted (unallocated) resources compared to
  FF. For moderate and higher arrival rates, the reduction is more
  pronounced around 50\%.

  \begin{figure}[htp]
  \begin{center}
    \begin{minipage}{0.42\linewidth}
     \centerline{\small ~~~~~~(a)}\includegraphics[width=1\textwidth]{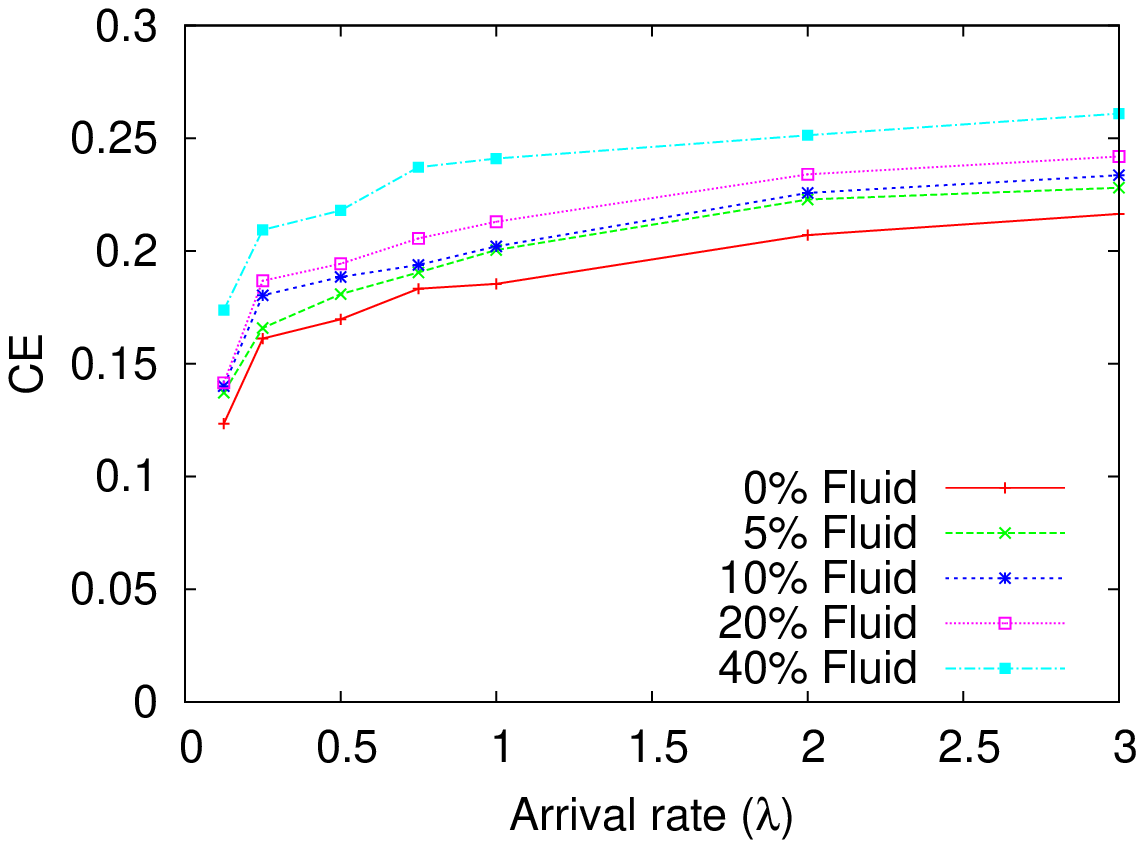}
    \end{minipage}
            \begin{minipage}{0.08\linewidth}
             $\quad$
            \end{minipage}
    \begin{minipage}{0.42\linewidth}
    \centerline{\small ~~~~~~(b)}\includegraphics[width=1\textwidth]{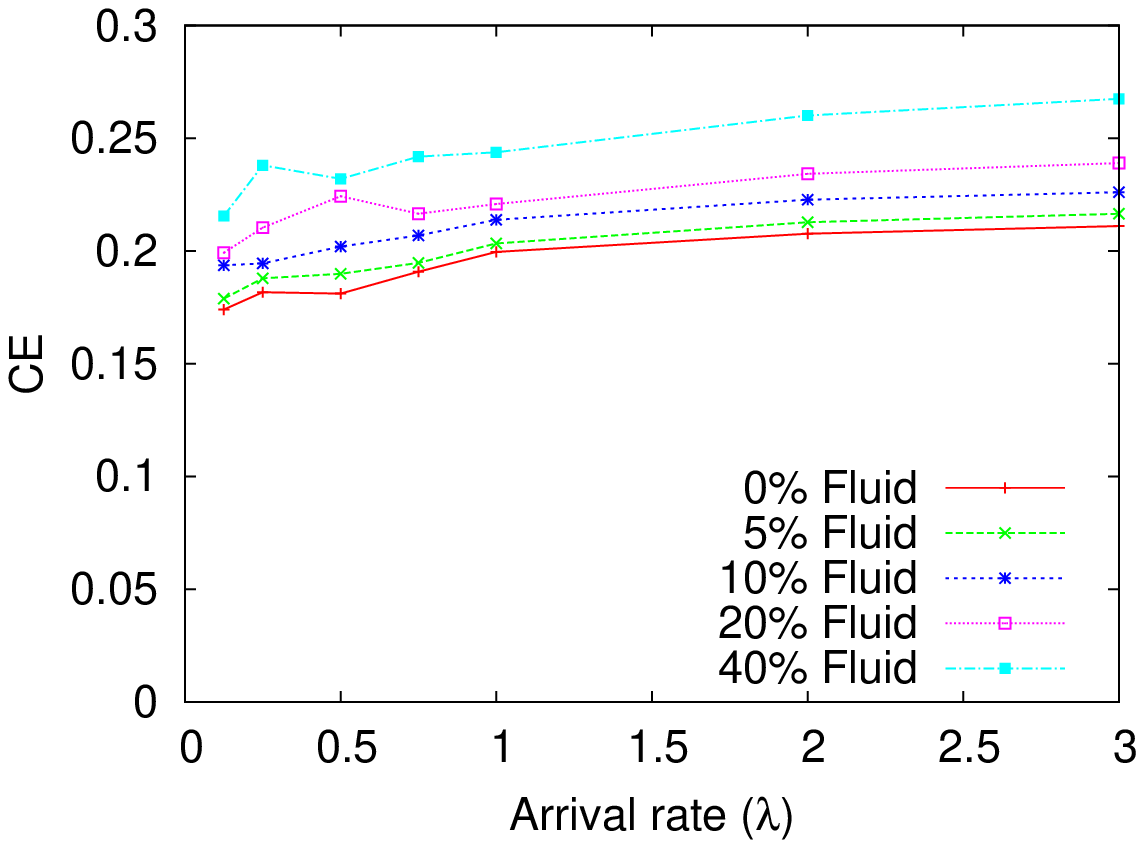}
    \end{minipage}
    \caption{\label{fig:BothTransSavings} Combined benefit from applying both fluid and non-fluid transformations, (a) FF-NR.  (b) BF-NR.}
    \begin{minipage}{0.42\linewidth}
     \centerline{\small ~~~~~~(a)}\includegraphics[width=1\textwidth]{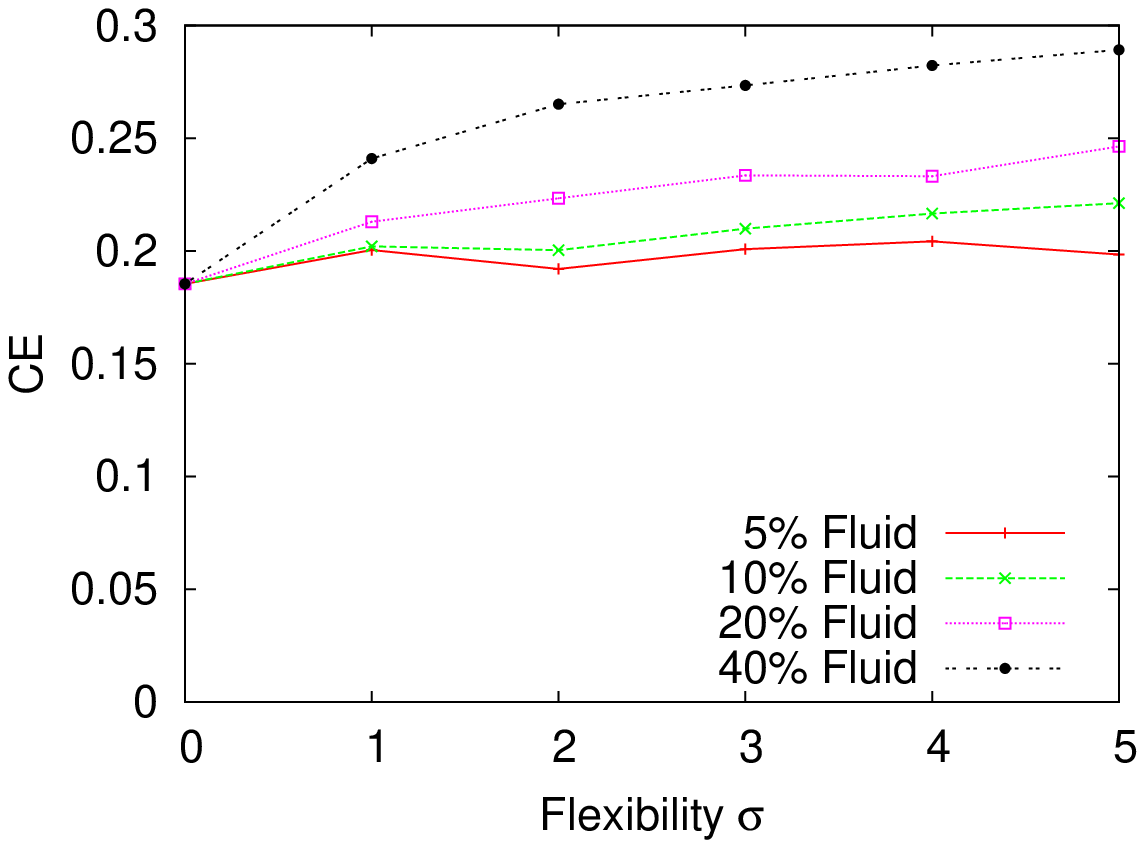}
    \end{minipage}
            \begin{minipage}{0.08\linewidth}
             $\quad$
            \end{minipage}
    \begin{minipage}{0.42\linewidth}
    \centerline{\small ~~~~~~(b)}\includegraphics[width=1\textwidth]{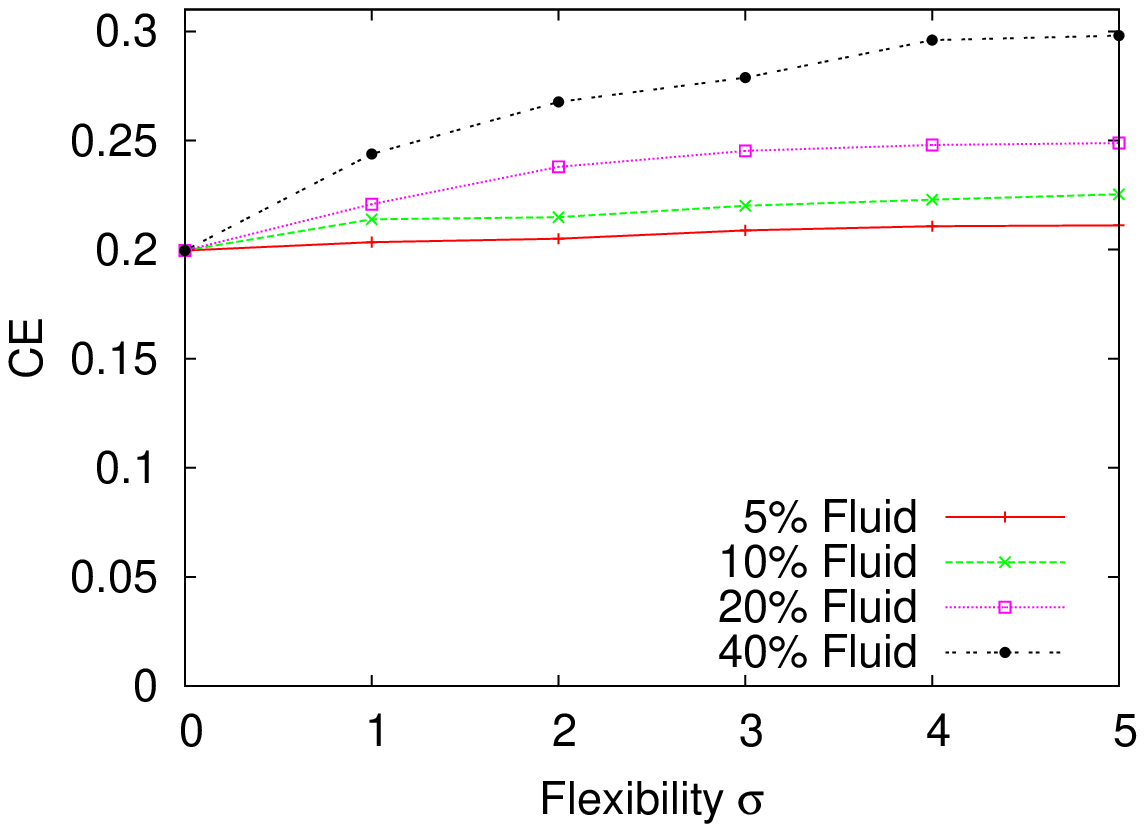}
    \end{minipage}
    \caption{\label{fig:BothTransStretchSavings}Effect of fluidity level when both fluid and non-fluid transformations are allowed, $\lambda = 1$: (a) FF-NR (b) BF-NR.}
  \end{center}
  \vspace{-0.3in}
   \end{figure}

Figure \ref{fig:packingratio} also shows results of experiments in
  which various migration policies are enabled -- namely Constrained
  Migration (CM) and Unconstrained Migration (UM). Both approaches
  result in better performance compared to NM approaches, yielding CE
  values between 0.55 and 0.6.

To summarize, this initial set of experiments suggests that, even in
  the absence of any fluidity in the workload (SLA flexibility), a
  reduction of up to 60\% in the wasted resources is to be expected
  through the use of SLA transformations and repacking.

\noindent{\bf Benefit from Fluid Transformations:} To measure the
  effect of fluidity on the overall colocation efficiency, we
  performed experiments using the same setting as before, while
  allowing a certain percentage of the requests to be fluid (with
  $\sigma = 1$), and only allowing fluid transformations to be
  applied. In other words, non-fluid workloads were not subjected to
  any transformations.

Figures \ref{fig:FluidTransSavings} (a) and (b) show the results we obtained using both FF-NR, and BF-NR, respectively, for various mixes of fluid and non-fluid workloads. As the result suggests, having a mix with even a small percentage of fluid workloads results in improvements (up to 20\%) that are comparable to what we obtained when transformation of non-fluid workloads was allowed with no repacking (cf. Figure \ref{fig:packingratio}).

In the previous experiment, we fixed the fluidity level ($\sigma = 1$)
  and studied the effect of changes in the mix of fluid versus
  non-fluid SLAs. Figure \ref{fig:FluidTransStretchSavings} (a) and (b) show results of additional experiments in which we changed the
  level of fluidity (the parameter $\sigma$) while keeping the value
  of $\lambda = 1$, for various mixes of fluid and non-fluid
  workloads. The results show that only small levels of flexibility
  ($\sigma < 2$) provided most of the achievable improvements when
  only fluid transformations are considered.

\noindent{\bf Combined Benefit from Fluid and non-Fluid
 Transformations:} Fixing the level of fluidity to a small value
 ($\sigma = 1$), Figures \ref{fig:BothTransSavings} (a) and (b) show
 results from experiments in which all possible transformations are
 allowed in a No-Repacking setting ({\em i.e.}, FF-NR  and BF-NR) for
 different workload mixes.  Figures \ref{fig:BothTransStretchSavings} (a) and (b) show results of experiments with similar settings -- all possible transformations are allowed in a No-Repacking setting -- with $\lambda = 1$, and varying fluidity level $\sigma$ (the flexibility parameter).
The results (also shown in Figure \ref{fig:FFLambda} for $\lambda = 1$ and $\sigma = 1$)
 show that the resulting
 performance is marginally better (by only a few percentage points)
 than applying {\em either} non-fluid transformations {\em or} fluid
 transformations.

\noindent{\bf Supporting Different SLA Policies:} IaaS providers outline different SLA policies which reflect their commitment to providing different classes of resource availability defined in terms of ``Uptime Percentage" (UP), which is typically defined as the percentage of time the resource is available during a five minute interval \cite{AmazonS3SLA,AmazonEC2SLA}. These policy types can be represented in our SLA model using  $D$ and $W$ and allow us to apply a wider range of transformation -- compared to SLAs of type $(C,T)$. Thus in our next experiment we measure the effect of supporting different SLA policy types. We use the same model described above to generate non-fluid workloads with $C$ and $T$ and set $\lambda = 1$. We also set $W$ to be equivalent to the number of windows that correspond to a five minute interval, and $D$ to be equal to $\delta$ * W, where $\delta$ defines UP and ranges from 99.9\% to 99.5\%. We also vary the percentage of workloads colocated under a UP policy.

Figure \ref{fig:FigDW} shows the result obtained using FF-NR. As expected, less uptime percentage leads to better CE. In addition, CE increases as we vary the the percentage of worloads colocated under a UP policy. The Figure highlights the use of additional transformation in providing us more flexiblity for colocation, and hence a better performance.

  \begin{figure}
  \begin{center}
  \begin{minipage}{0.42\linewidth}
  	\includegraphics[width=1\textwidth]{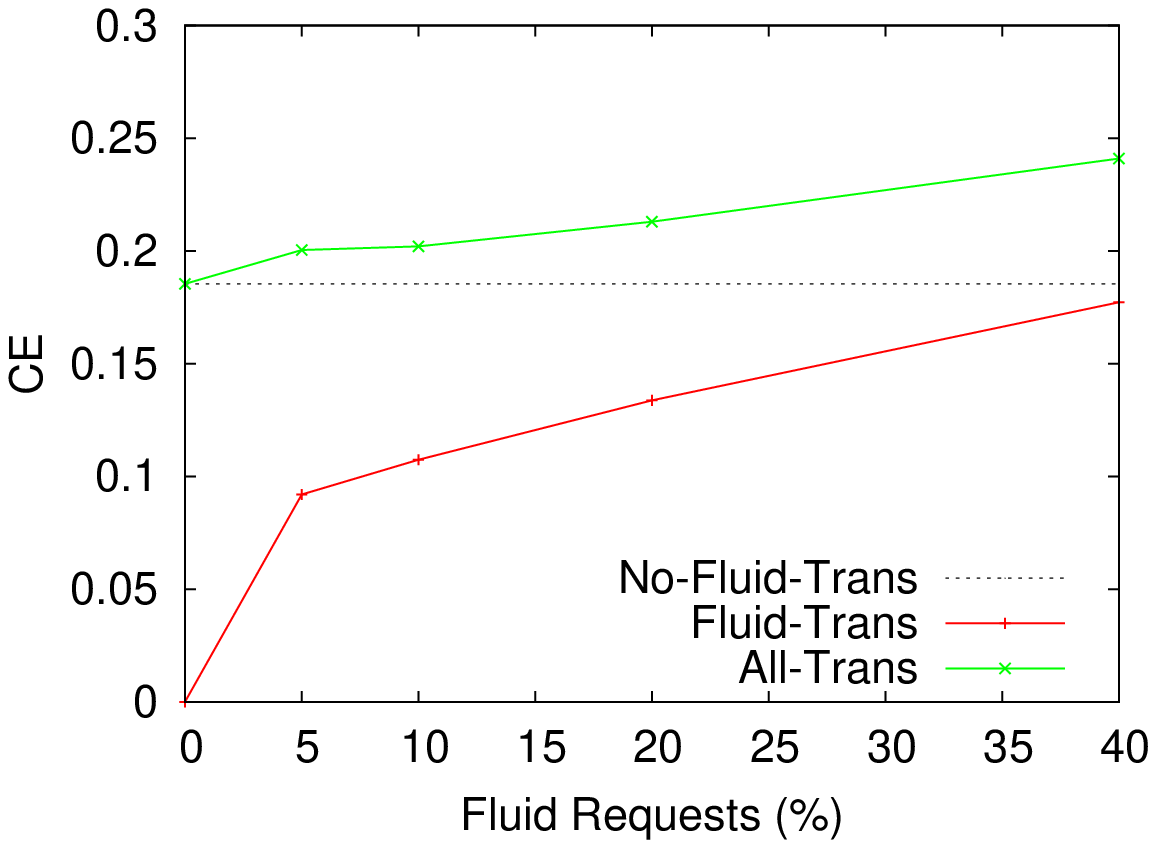}
  	  \caption{Effect of different types of transformations on the CE of
  	  FF-NR ($\lambda = 1$).}\label{fig:FFLambda}
  \end{minipage}
  \begin{minipage}{0.08\linewidth}
   $\quad$
  \end{minipage}
    \begin{minipage}{0.42\linewidth}
    	\includegraphics[width=1\textwidth, right]{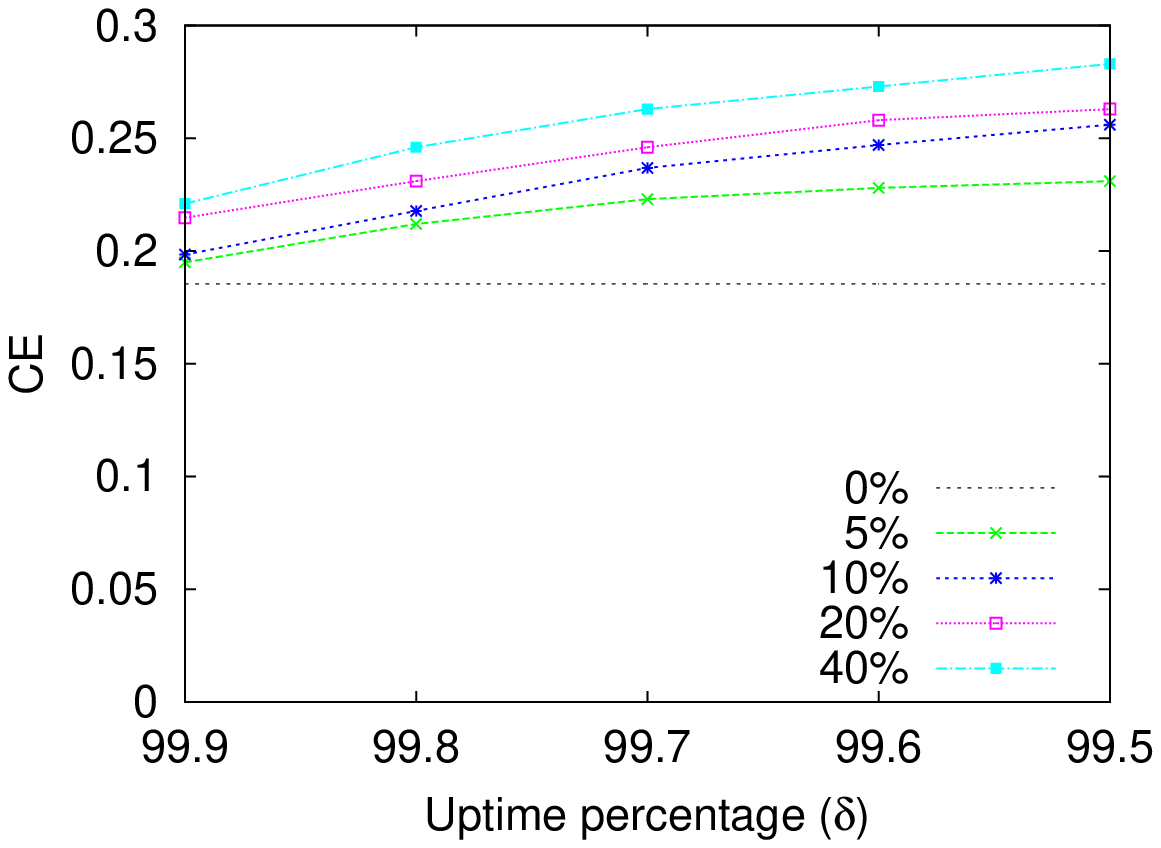}
  		\caption{Effect of additional SLA transformations on the CE of FF-NR ($\lambda = 1$).}\label{fig:FigDW}
    \end{minipage}
      \end{center}
      \vspace{-0.3in}
\end{figure}

\noindent{\bf System Scalability:} In our experiments, the WAS
  component of \morphosys\ was able to handle large clusters of
  resources (disks) -- up to $4,000$. If migration of
  workloads is not enabled, WRS is able to handle even larger clusters
  in an ``online'' fashion. Enabling migration introduces significant
  computational overheads when dealing with large clusters. This makes the
  use of migration in WAS more practical for ``off-line'' (batch)
  use.

In a typical IaaS setting, there might be even more than a few thousand resources
  under management, which would be more than what a single WAS instance can
  handle. We note that in such cases, a practical solution would be to
  group the resources under management into separate clusters, each of
  which is managed by a single WAS.

We note that our measurement of scalability deals only with the
  computational aspect of \morphosys\ (namely, computing efficient
  colocation configuration). In actual deployments, scalability will
  also depend on additional considerations due to system overheads
  that are dependent on the specific setting. For IaaS settings like
  the one considered in the experiments we presented in this paper
  (colocation of streaming servers), one would not expect much
  reconfiguration overheads except that of migrating and merging user stream requests. The effect of such overheads can be reduced using a number of ways including manipulation of stream playout rates 
or by inserting secondary content in the stream \cite{basu1999optimal} (e.g. as done in systems like Crackle \cite{crackle}). However, in other settings involving more significant overheads ({\em e.g.}, the handling of large memory VM images to allow VM migration across hosts), the scalability of \morphosys\ will depend on the efficient management of such  aspects.


\section{Related Work}\label{sec:relatedwork}

\noindent{\bf Service Level Agreements (SLAs):} There has been a
  significant amount of research on various topics related to
  SLAs. The usage of resource management in grids have been considered
  in \cite{czajkowski2002snap}; issues related to	
  specification of SLAs have been considered in
  \cite{Keller2003TheWSLA}; and topics related to the economic aspects of
  SLAs usage for service provisioning through negotiation between
  consumers and providers are considered in \cite{Barmouta2003GridBank}. 
A common characteristic (and/or inherent
  assumption) in the above-referenced body of prior work is that the
  customer's SLAs are immutable. We break that assumption by
  recognizing the fact that there could be multiple, yet functionally
  equivalent ways of expressing and honoring SLAs. Our \morphosys
  $\;$framework utilizes this degree of freedom to achieve
  significantly better colocation.

\noindent{\bf VM Colocation:} VM consolidation and colocation are very
  active research topics  
  that aim to minimize the operating cost
  of data centers in terms of hardware, energy, and cooling, as well
  as providing a potential benefit in terms of achieving higher
  performance at no additional cost. Much work has gone into studying
  the consolidation of workloads across various resources: CPU,
  memory, and network \cite{Cardosa2009Shares,Vatche2010CaaS,  Wood2009Memory,
  Meng2010Improving, podzimek2015analyzing}. Podzimek {\em et al} study the impact of CPU pinning
  on performance interference and energy efficiency \cite{podzimek2015analyzing}.
  Wood {\em et al} \cite{Wood2009Memory} promote
  colocation as a way to minimize memory utilization by sharing
  portions of the physical memory between multiple colocated
  VMs.
  Ishakian {\em et al} \cite{Vatche2010CaaS} presented a colocation
  service which utilizes game theoretic aspects to achieve significant
  cost savings for its (selfish) users. Network-aware consolidations
  have been studied in \cite {Meng2010Improving}. Affinity aware VM Placement has been studied in \cite{chen2016joint}. Colocation has also
  been explored as a means of reducing the power consumption in data
  centers, for example by Cardosa {\em et al} \cite{Cardosa2009Shares}.
  We  note that in all these works, the specification of the resource
  requirements for a VM is static and based on some fixed average
  requested capacities. In our work, the specification of resource
  needs is much more expressive as it allows VMs to control their
  resource allocation time-scale, as well as expose any flexibilities
  VMs may have regarding such timescale.

\noindent{\bf Real-Time Scheduling:} Different scheduling algorithms
  were suggested to deal with scheduling of periodic/aperiodic hard
  real-time and soft-real time tasks \cite{R:Davis:2009d} (and the
  references within). In addition, variants of proportional-share
  scheduling algorithms -- based on the concept of Generalized
  Processor Sharing (GPS) have been suggested \cite{Duda1999Borrowed} -- which allow the integration of different
  classes of applications. These approaches however do not take into
  consideration reservation of resources and fairness in allocating
  resources. The work by Buttazzo {\em et al} \cite{Buttazzo1998Elastic}
  present an elastic task model based on a task defined using a tuple
  $(C, T, T_{min},T_{max},e)$, where $T$ is the period that the task
  requires, $T_{min}$ and $T_{max}$ define the maximum and minimum
  periods that a task can accept. Our SLA model allows us to express
  classes of applications that are more general than the elastic task
  model. Moreover, the SLA transformations that we utilize allow us to
  serve workloads under completely different $(C,T)$ server supplied
  resources.

 \noindent{\bf Hierarchical Scheduling:} hierarchical scheduling (and in particular
  hierarchical CPU scheduling) has been a topic of research for
  over a decade because it allowed multiple scheduling
  mechanisms to co-exist on the same infrastructure -- {\em
  i.e.}, regardless of the underlying system scheduler. For
  example, Goyal {\em et al} \cite{Goyal1996Hierarchical}
  proposed a hierarchical scheduling framework for supporting
  different application classes in a multimedia system;
  Shin and Lee \cite{Shin2005Compositional} further generalized
  this concept, advocating its use in embedded systems. Along
  the same lines, there has been a growing attention to building
  hierarchical real-time scheduling frameworks supporting
  different types of workloads \cite{Regehr2001HLS, Henzinger2006AnInterface}. 

  A common characteristic (and/or inherent assumption) in the
  above-referenced, large body of prior work (which we emphasize
  is not exhaustive) is that the ``clustering'' (or grouping) of
  applications and/or schedulers under a common ancestor in the
  scheduling hierarchy is known {\em a priori} based on domain
  specific knowledge, {\em e.g.}, all applications with the same
  priority are grouped into a single cluster, or all
  applications requiring a particular flavor of scheduling ({\em
  e.g.}, periodic real-time EDF or RMS) are grouped into a
  single cluster managed by the desired scheduling scheme. Given
  such a {\em fixed} hierarchical structure, most of this prior
  body of work is concerned with the schedulability problem --
  namely deciding whether available resources are able to
  support this {\em fixed} structure.


\noindent{\bf Resource Allocation in Distributed Settings:} Different
  approaches have been suggested to deal with resource allocation in
  distributed settings
  \cite{podzimek2017reprint,Netto2007SLABased,foster2002distributed,    
  Buyya2000NimrodG,Czajkowski1999Resource} among many others. In these
  works, the main mechanisms used for providing QoS guarantees to
  users are through resource reservations. Such reservations can be
  immediate, undertaken in advance \cite{foster2002distributed}, or
  flexible \cite{Netto2007SLABased}. To achieve efficient allocation
  and increased resource utilization, these approaches model workloads
  as having a start time and end time. Under such approaches the
  resources allocated to a workload would still be based on a
  percentage reservation, which results in performance variability
  specifically for periodic workload requests. Our work complements
  these models by allowing for an expressive SLA model that admits the
  specification of constraint flexibilities. We believe that providing
  this capability is crucial for the deployment of QoS-constrained
  workloads while at the same time ensuring efficient utilization of
  resources. 

\noindent{\bf Resource Allocation in Cloud Settings:} Efficient scheduling of workloads in the cloud settings is an active topic of research \cite{Ishakian2010AType,jennings2015resource} with central schedulers that focus on soft constraints such as data locality \cite{zaharia2010delay}, deadlines \cite{yang2013bubble,delimitrou2014quasar}, resource guarantees \cite{curino2014reservation} or fairness \cite{ghodsi2011dominant}. Carvalho {\em  et al} \cite{carvalho2014long} analyze historical cloud workload data and motivate for the introduction of a new class of cloud resource offerings.  Curino {\em  et al.} \cite{curino2014reservation} propose a reservation-based system with a declarative language similar to how EC2 resources are requested. In addition to resources requests, our SLA language incorporates predictable (timely) access to resources and allows for the customers to provide their flexibility. Yang {\em  et al} \cite{yang2013bubble} developed an online scheme that detects memory pressure and finds colocations that avoid interference on latency-sensitive applications. Zaharia {\em  et al} \cite{zaharia2010delay} use delayed scheduling of tasks to capitalize on data locality. Results show decreased job turnaround times. Chen {\em  et al} \cite{chen2014distributed} proposed a long term load balancing VM migration algorithm based on finite-markov decision process with the goal of reducing SLA violations. The \morphosys\ framework uses safe transformations of workloads as a tool that complements these methods  to enable  efficient colocation while ensuring that the scheduled workloads have predictable access to resources.

\section{Conclusion}\label{sec:conclusion}
The value proposition of virtualization technologies is highly dependent on our ability to identify judicious mappings of physical resources to virtualized instances that could be acquired and consumed by applications subject to desirable performance ({\em e.g.}, QoS) bounds. These bounds are often spelled out as a Service Level Agreement (SLA) contract between the resource provider (hosting infrastructure) and the resource consumer (application workload). By necessity, since infrastructure providers must cater to very many types of applications, SLAs are typically expressed as fixed fractions of resource capacities that must be allocated (or are promised) for {\em unencumbered} use. That said, the mapping   between ``desirable performance bounds'' and SLAs is not unique. Indeed, it is often the case that {\em multiple} SLA expressions might be functionally equivalent with respect to the satisfaction of these performance bounds. Having the flexibility to transform SLAs from one form to another in a manner that is {\em safe} would enable hosting solutions to achieve significant economies of scale.

In this paper, we proposed a new SLA model for managing  QoS-constrained workloads in IaaS settings. Our SLA model supports an expressive specification of the requirements for various classes of applications, thus facilitating auditability and performance predictability using simple measurement techniques. We presented the architectural and algorithmic blueprints of a framework for the deployment of dynamic colocation services. \morphosys\  utilizes workload SLA transformations (exploiting any flexibility therein) for efficient management of QoS-constrained workloads in the cloud. We evaluated our framework by considering a cloud storage service scenario, and performed extensive evaluation using real video traces. The results reported in this paper -- which suggest significant reduction in unallocated (wasted) resources of up to 60 percent -- underscore the potential from deploying \morphosys-based services.


\bibliographystyle{elsarticle-num}
\bibliography{bibtex}

\appendix
\section{}

\begin{lem} \label{k1overlap}
Given the periods $T$ and $T'$ such that $T \leq T'/2$.
Then an interval of length $T'$ would contain at least $(K-1)$ intervals
of length $T$ where $K = \lfloor T'/T \rfloor$.
\end{lem}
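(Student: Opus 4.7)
My plan is to set this up as a straightforward counting argument about aligned tiles. First, I would fix the periodic tiling of the time axis by the ``fixed'' intervals $[nT, (n+1)T]$ for $n \in \mathbb{N}$, and consider an arbitrary interval $I' = [a, a+T']$ of length $T'$. The key observation is that the tile $[nT, (n+1)T]$ is fully contained in $I'$ if and only if $a \le nT$ and $(n+1)T \le a+T'$, which is equivalent to $nT$ lying in the closed interval $[a,\, a + T' - T]$, or equivalently $n$ lying in $[a/T,\, (a+T'-T)/T]$.

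Next I would reduce the problem to a standard fact about integer points in a real interval: any closed interval of length $L \ge 0$ contains at least $\lfloor L \rfloor$ integers (the worst case being when both endpoints are just past an integer). Applying this fact to the range $[a/T,\, (a+T'-T)/T]$, whose length is $(T'-T)/T = T'/T - 1$, gives at least
$$\lfloor T'/T - 1 \rfloor \;=\; \lfloor T'/T \rfloor - 1 \;=\; K - 1$$
admissible values of $n$, and hence at least $K-1$ tiles of length $T$ fully contained in $I'$. The hypothesis $T \le T'/2$ is used only to guarantee $K \ge 2$, so that the lower bound $K-1 \ge 1$ is non-vacuous and meaningful for the way Theorem~\ref{thm2} invokes the lemma (to argue that those $K-1$ fully contained tiles contribute at least $(K-1)C$ of allocation to $I'$).

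There is no real obstacle anticipated; the only subtle point is pinning down the worst-case alignment, in which $a$ falls just past a tile boundary so that the first fully contained tile begins at the next multiple of $T$. This is precisely what the $\lfloor L \rfloor$ lower bound on integer points encodes, so no case analysis beyond invoking that bound is needed.
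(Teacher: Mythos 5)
Your proof is correct, and it reaches the bound by a genuinely different route than the paper. The paper argues by contradiction: it supposes the interval of length $T'$ fully contains only $K-2$ tiles of length $T$, decomposes $T'$ as $x + (K-2)T + y$ with the two end remnants satisfying $x < T$ and $y < T$, and derives a contradiction with $KT \leq T'$ (the published writeup actually swaps the roles of $T$ and $T'$ and garbles the inequality chain, but that is the intended argument). You instead give a direct count: a fully contained tile corresponds to an integer $n$ in a real interval of length $T'/T - 1$, and a closed interval of length $L$ contains at least $\lfloor L \rfloor$ integers, yielding $\lfloor T'/T \rfloor - 1 = K-1$ immediately. The two arguments encode the same worst-case-alignment fact (your floor bound on integer points is exactly the statement that the two fractional end remnants each waste less than one tile), but yours is cleaner, avoids the case analysis and the contradiction, and makes explicit the useful observation that the hypothesis $T \leq T'/2$ is not needed for the count itself, only to ensure $K \geq 2$ so that the bound $K-1 \geq 1$ is nontrivial in the way Theorem~\ref{thm2} uses it. One cosmetic caveat: the paper's schedules are discrete functions on $\mathbb{N}$, so strictly speaking the tiles are blocks of integer slots rather than closed real intervals, but the containment count is identical under either formalization, so this is a modeling choice rather than a gap.
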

\begin{proof}
Figure \ref{k-1case} highlights the existence of a schedule such that $T'$ overlaps $(K-1) * T$ intervals where $T'=7$ and $T=3$ and $K = 2$.
Assume the existance of a schedule where $T$ overlaps only with $(K-2) * T'$ intervals as shown in Figure \ref{k-2case}.

  \begin{figure*}[htp]
  \begin{minipage}{0.42\linewidth}
  	\includegraphics[width=1\textwidth, left]{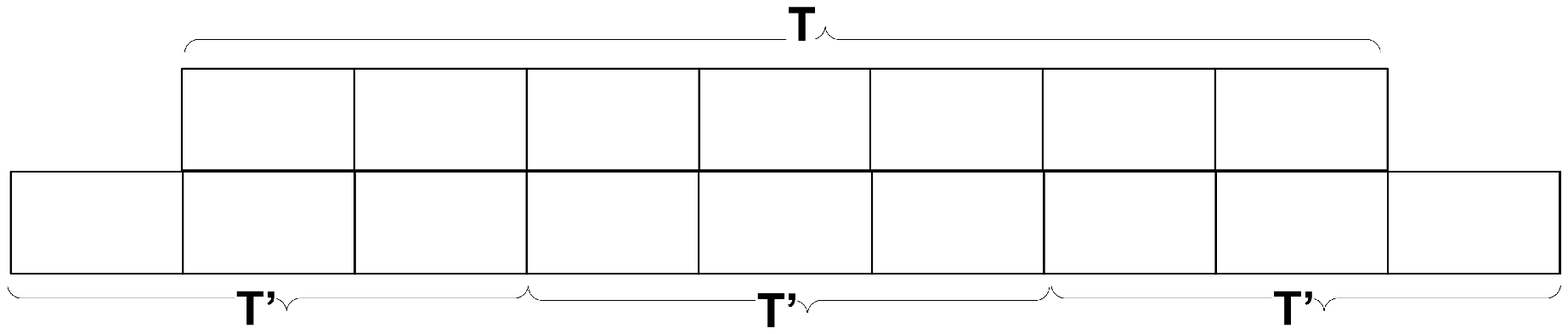}\\
  	\caption{Example of $T'$ overlapping $(K-1) * T$ intervals} \label{k-1case}
  \end{minipage}
  \begin{minipage}{0.10\linewidth}
   $\quad$
  \end{minipage}
    \begin{minipage}{0.42\linewidth}
    	\includegraphics[width=1\textwidth, right]{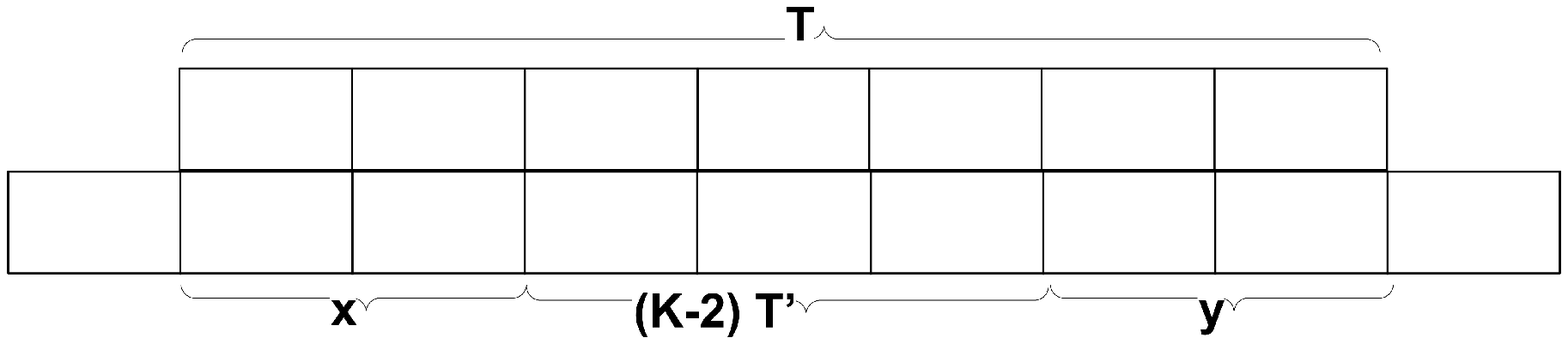}\\
    	\caption{\label{k-2case}Schedule where $T$ overlaps with $(K-2) * T'$}
    \end{minipage}
\end{figure*}

We observe that $T = x + (K-2) * T' + y$. From the definition we have $K * T' \leq T$. Therefore, $K * T' \leq x + (K-2) * T' + y \leq x + y.$
But by definition $x < T'$ and $y <T'$ -- a contradiction.
\end{proof}

\begin{lem} \label{Necessary}
Given $(C,T,D,W) \lhd (C,T,D',W')$  or $(C,T,D,W) \lhd (C,T,D',W')$, it is necessary for $D/W \leq D'/W'$.
\end{lem}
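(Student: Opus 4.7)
I would prove Lemma~\ref{Necessary} by contrapositive: assuming $D/W > D'/W'$, I will exhibit a single schedule $\alpha$ that satisfies $(C,T,D,W)$ but violates $(C,T,D',W')$, thereby refuting $\textlbrackdbl (C,T,D,W)\textrbrackdbl \subseteq \textlbrackdbl (C,T,D',W')\textrbrackdbl$.

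\textbf{Construction.} Let $N=\mathrm{lcm}(W,W')$. I would construct $\alpha:\mathbb{N}\to\{0,1\}$ periodic in blocks of $W$ intervals of length $T$ as follows. Within each $W$-window $[QWT,(Q{+}1)WT)$, pick an arbitrary subset of exactly $D$ of the $W$ constituent $T$-subintervals and set $\alpha\equiv 0$ on those; on each of the remaining $W-D$ subintervals, set $\alpha=1$ on the first $C$ positions and $\alpha=0$ afterwards (this is well-defined since $C\le T$). By definition $\mathbf{A}_{\alpha,C,T}=1$ on the $W-D$ ``good'' subintervals and $\mathbf{A}_{\alpha,C,T}=0$ on the $D$ ``missed'' ones, so $\alpha\vDash(C,T,D,W)$.

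\textbf{Counting argument.} Consider the aligned block $[0,NT)$, which partitions exactly into $N/W$ many $W$-windows and exactly into $N/W'$ many $W'$-windows. The total number of missed $T$-subintervals inside $[0,NT)$ equals $D\cdot(N/W)$ by construction. If $\alpha$ were to satisfy $(C,T,D',W')$, then each of the $N/W'$ many $W'$-windows would contain at most $D'$ misses, forcing the total to be at most $D'\cdot(N/W')$. Combining,
\begin{equation*}
D\cdot\frac{N}{W}\;\le\;D'\cdot\frac{N}{W'}\qquad\Longleftrightarrow\qquad \frac{D}{W}\le\frac{D'}{W'},
\end{equation*}
contradicting our assumption. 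Hence some $W'$-window (aligned at a multiple of $W'T$) contains strictly more than $D'$ misses, so $\alpha\not\vDash(C,T,D',W')$, which is the desired refutation of the subtyping.

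\textbf{Expected obstacle.} The argument is essentially a pigeonhole on aligned windows, so the real care is bookkeeping rather than insight. The one place I would be careful is verifying that the $W$-window and $W'$-window partitions really do align on $[0,NT)$ (which is why I take $N=\mathrm{lcm}(W,W')$ and appeal to the fact that both types of windows in the definitions of $\vDash$ are indexed by $m=QWT$ and $m=QW'T$ respectively, both starting at $0$). Once the alignment is in hand the counting inequality is immediate, so I do not anticipate a substantive difficulty beyond writing the construction cleanly.
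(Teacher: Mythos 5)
Your proof is correct, and it is considerably more rigorous than the one the paper actually gives. The paper's proof of Lemma~\ref{Necessary} is a three-way case analysis on the sign of $D-D'$ and $W-W'$, in each case merely asserting that ``unless $D/W \le D'/W'$'' one could exhibit a schedule witnessing the failure of the inclusion; no schedule is ever constructed, so the paper's argument is really a proof sketch. You instead take the contrapositive uniformly: you build one explicit periodic adversarial schedule (exactly $D$ dead $T$-subintervals per aligned $W$-window, the rest front-loaded with $C$ units) and then run an averaging/pigeonhole argument over the block $[0,NT)$ with $N=\mathrm{lcm}(W,W')$, where the two window partitions align. This buys you a single self-contained argument with no case split, and it makes precise exactly where the inequality $D/W\le D'/W'$ comes from (a density comparison of misses per window). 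The only point worth flagging is the degenerate case $C=0$: there a subinterval with $\alpha\equiv 0$ still has $\mathbf{A}_{\alpha,C,T}=1$, every schedule satisfies both types, and the lemma itself fails; your construction implicitly assumes $C\ge 1$, which is the intended reading (the paper's later SLA definition requires $0<C\le T$), but it deserves one sentence. With that caveat recorded, your write-up could replace the paper's proof outright.
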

\begin{proof}
We provide counter-examples for all possible values of $D$ and $W$.
\begin{itemize}
  \item $D \leq D'$ and $W \leq W'$. Then unless $D/W \leq
      D'/W'$, we could have $(C,T,D,D) \lhd
      (C,T,D',W')$ or $(C,T,D,D) \lhd (C,T,D',W')$.
      Contradiction.
  \item $D \geq D'$ and $W \leq W'$ Then unless $D/W \leq
      D'/W'$, we could have $(C,T,W,W) \lhd
      (C,T,D',W')$ or $(C,T,W,W) \lhd (C,T,D',W')$.
      where $D = W$. Contradiction.
  \item $D \geq D'$ and $W \geq W'$ Then unless $D/W \leq
      D'/W'$, we could have $(C,T,W,W) \lhd
      (C,T,D',W')$ or $(C,T,W,W) \lhd (C,T,D',W')$.
      Contradiction.
\end{itemize}
   Thus $D/W \leq D'/W'$.
\end{proof}

\end{document}